\newif\iflong
\newtheorem{theorem}{Theorem}
\newtheorem{lemma}[theorem]{Lemma}
\newtheorem{definition}[theorem]{Definition}
\newtheorem{observation}[theorem]{Observation}
\newtheorem{conjecture}[theorem]{Conjecture}
\newcommand{\BFS}{\textsc{Bfs}}
\newcommand{\RR}{\mathcal{L}}
\newcommand{\BB}{\mathcal{B}}
\newcommand{\CC}{\mathcal{C}}
\newcommand{\TT}{\mathcal{T}}
\newcommand{\PP}{\mathcal{P}}
\newcommand{\OM}{\omega}
\newcommand{\SSS}{\mathcal{S}}
\newcommand{\TL}{\tilde O}
\newcommand{\NS}{\sqrt {n\sigma}}
\newcommand{\NBS}{\sqrt {\frac{n}{\sigma}}}
\newcommand{\SI}{\sigma}
\newcommand{\LCA}{\textsc{Lca}}
\newcommand{\SSR}{\textsc{Ssrp}}
\newcommand{\MSR}{\textsc{Msrp}}
\newcommand{\BMM}{\textsc{Bmm}}
\DeclareMathOperator{\E}{\mathbb{E}}
\newcommand{\SUF}{\textsc{Suffix}}
\newcommand{\MTC}{\textsc{Mtc}}
\newcommand{\QUE}{\textsc{Query}}
\newcommand{\Cross}{$\mathbin{\tikz [x=1.4ex,y=1.4ex,line
width=.2ex] \draw (0,0) -- (1,1) (0,1) -- (1,0);}$}
\author{
  Manoj Gupta,\\
  IIT Gandhinagar,\\
  \texttt{gmanoj@iitgn.ac.in}
  \and
  Rahul Jain\textsuperscript{$\dagger$} \footnote{\textsuperscript{$\dagger$} The work was done  when the author was a student at IIT Gandhinagar}
,\\
  Goldman Sachs, Bangalore\\
  \texttt{rahul.e.jain@gs.com}
  \and
  Nitiksha Modi,\\
  IIT Gandhinagar,\\
  \texttt{nitiksha.modi@iitgn.ac.in}
}
\begin{document}

\title{Multiple Source  Replacement Path Problem }
\maketitle

\begin{abstract}
One of the classical line of work in graph algorithms   has been the Replacement Path Problem:
given a graph $G$, $s$ and $t$, find   shortest paths from $s$ to $t$ avoiding each
edge $e$ on the shortest path from $s$ to $t$.
These paths are called replacement paths in literature.
 For an undirected and unweighted graph, (Malik, Mittal, and Gupta, Operation Research Letters, 1989) and (Hershberger and Suri, FOCS 2001)
designed an algorithm that solves the replacement path problem in $\TL(m+n)$ time\footnote{$\TL$ notation hides poly$\log n$ factor.}.
  It is natural to ask whether we can generalize the replacement path problem: {\em can we find all replacement paths from a source $s$ to all vertices in $G$?} This problem is called the Single Source Replacement Path Problem.

Recently (Chechik and Cohen, SODA 2019) designed a randomized combinatorial algorithm that solves the Single Source Replacement Path Problem in $\TL(m\sqrt n\ + n^2)$ time.
One of the questions left unanswered by their work is the case when there are many sources, not one. When there are $n$ sources,  the combinatorial algorithm of (Bernstein and Karger, STOC 2009) can be used to find all pair replacement path in $\TL(mn + n^3)$ time.  However, there is no result known for any general $\SI$. Thus, the problem we study is defined as follows: given a set of $\SI$ sources, we want to find the replacement path from these sources to all vertices in $G$. We give a randomized combinatorial algorithm for this problem that takes $\TL(m\NS +\ \SI n^2)$ time. This result generalizes both results known for this problem. Our algorithm is much  different and arguably simpler than (Chechik and Cohen, SODA 2019). Like them, we show  a matching conditional lower bound using the
Boolean Matrix Multiplication conjecture.

\end{abstract}

% !TEX root = paper.tex

\section{Introduction}
One of the classical line of work in graph algorithms is the
replacement path problem.  The general setting for this problem is as follows:
we are given a graph $G$ and two vertices $s$ and $t$. We want to output the
length of all shortest paths avoiding edges on the $st$ path. Note that
we just want to output the length of these paths -- not the path itself. These
paths are called replacement paths in the literature.

Replacement paths were first investigated due to their relation with auction theory,
where they were used to compute the Vickrey pricing of edges owned by selfish agents
 \cite{Hershberger2001,NisanR01}. One can also generalize the replacement path problem and ask to output $k$
--not just one -- replacement paths \cite{Roditty2012a,Bernstein2010,GotthilfL2009}.
This is also called the $k$-replacement path problem.

The replacement path problem has been extensively studied.
There are algorithms \cite{MalikMG89,NardelliPW03,Hershberger2001} that compute replacement paths in an undirected
and unweighted graph
in
 $\TL(m + n)$ time. For directed, unweighted graphs, Roditty and
Zwick \cite{Roditty2012a}
designed an algorithm that  finds all
replacement paths in $O(m
\sqrt n)$ time.
For the $k$-shortest path problem, Roditty
\cite{Roditty2007}
presented
an algorithm with an approximation ratio
$\nicefrac{3}{2} $, and the running time  $O(k(m\sqrt n+n^{3/2}
\log n))$.
 Bernstein \cite{Bernstein2010} improved the
above result
to get an approximation factor of $(1+\epsilon)$
and running
time $O(km/\epsilon)$.  The same paper also
gives an improved
algorithm for the approximate $st$ replacement
path problem.

Let us now generalize the replacement path problem, where we want to find replacement paths from $s$ to every other vertex in the graph. We define this problem
formally.

{\bf Single Source Replacement Path} ($\SSR$): {\it Given a graph $G$ and a source $s$, design an algorithm that can find
the length of shortest paths from $s$ to every vertex $t \in V$ avoiding each edge on $st$ path.}

Grandoni and Williams \cite{GrandoniW12} were the first to study the $\SSR$ problem in a directed graph. For a directed graph having weights in the range $[1, M]$, they designed an algorithm for the $\SSR$ problem that takes $O(M n^{\OM})$ time (where $\OM$ is
the matrix multiplication exponent \cite{Gall14a,Williams12}). They also presented results when
there are negative weights on the edges. Recently, Chechik and Cohen \cite{ChechikC19}
designed an algorithm that solves the single
source
replacement path problem in $\TL(m \sqrt
n + n^2)$ time in an undirected graph. They also show that this
time
is nearly optimal using the conditional lower
bound of Boolean matrix multiplication. Other related work includes \cite{EmekYPR2010,Eppstein94,RodittyZ11,Williams11,WilliamsW18}.

%Our view of the replacement path problem is slightly different.
  We now generalize the $\SSR$ problem when there are multiple sources, we call this the Multiple
Source Replacement Path Problem.

{\bf Multiple Source Replacement Path Problem} ($\MSR$): {\it  Given a graph $G$ and a set of sources $\SSS$,
design an algorithm that finds length of all replacement paths from $s$ to $t$ where $s \in \SSS$ ($|\SSS| = \SI$) and
$t \in V$.}

To the best of our knowledge, there is only one work that designs a combinatorial algorithm to solve the above problems.
If there are $n$ sources, then the work of Bernstein and Karger \cite{Bernstein2009} can be used to
find the replacement path for any pair of vertices in $\TL(mn + n^3)$ time. In their work \cite{Bernstein2009},
Bernstein and Karger built a distance oracle of size $\TL(n^2)$ that
can answer the following query: $\QUE(x,y,e)$: find the length of the replacement path
from $x$ to $ y$ avoiding $e$ where $x,y \in V$. They designed an algorithm that
builds this oracle in $\TL(mn)$ time and answers queries in $O(1)$ time. Once this oracle is built, one can just query
this oracle and find all replacement paths between any pair of vertices. This takes $O(n^3)$
time. Thus, when there are $n$ sources, the  $\MSR$ problem can
be solved in $\TL(mn + n^3)$ time.

The main open question left behind by these two works is the case
when there are $|\SSS| = \SI$ sources. In this paper, we solve this question by showing the
following theorem:

\begin{theorem}
  There is a randomized combinatorial algorithm that solves the $\MSR$ problem in $\TL(m\NS + \SI n^2)$ time\footnote{Note that the second term in the running
time is needed as there may be $\Omega(\SI n^2)$
terms to output.}.
\end{theorem}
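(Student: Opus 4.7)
The plan is to generalize the single-source recipe by scaling a random hitting set to the number of sources. First I would run a BFS from every source (cost $O(\SI(m+n))$) and from every vertex of a uniformly random sample $B\subseteq V$ of size $\Theta(\NS\log n)$ (cost $\TL(m\NS)$, which is the first term in the target bound). Fixing the threshold $L=\Theta(\NBS\log n)$, a Chernoff bound guarantees that every simple path on at least $L$ vertices meets $B$ with high probability; this calibration is the only place where $\SI$ enters the parameter setting, and it reduces to the Chechik--Cohen choice when $\SI=1$.

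\textbf{Long vs.\ short replacement paths.} For each triple $(s,t,e)$ with $e$ on the shortest $s$-$t$ path, I call the replacement path \emph{long} if it uses at least $L$ edges, and \emph{short} otherwise. For long paths some pivot $b\in B$ lies on the path, so the length equals $\min_{b\in B}\bigl[d(s,b;e)+d(b,t;e)\bigr]$, where $d(\cdot,\cdot;e)$ denotes the shortest-path distance avoiding edge $e$. These ``partial'' replacement distances can be read off from the precomputed BFS data combined with the $\TL(m+n)$-time single-pair replacement subroutine of Malik, Mittal and Gupta, invoked on the $\SI\cdot|B|=\TL(\SI\NS)$ source-pivot pairs; together with the $n$ choices of $t$ this fits into $\TL(\SI n^2)$. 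For short paths I enumerate the detours directly: for each source $s$, use the BFS tree $T_s$ together with a depth-$L$ BFS restricted to the subtree detached by each removed tree edge $e$, reusing $T_s$ outside the perturbed region so that edge scans amortize; the total work becomes $\TL(\SI\cdot n\cdot L\cdot m/n)=\TL(m\NS)$.

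\textbf{Main obstacle.} The delicate point is the avoidance bookkeeping in the long case: when combining $d(s,b;e)$ with $d(b,t;e)$, the same forbidden edge $e$ may lie simultaneously on the $s$-to-$b$ and the $b$-to-$t$ shortest paths, so the raw BFS distances from $b$ do not directly give the replacement distance. I would resolve this by exploiting the detour-edge characterization of BFS replacement paths in undirected unweighted graphs: for each pivot $b\in B$, the quantities $d(b,\cdot;e)$ as $e$ ranges over a fixed tree path admit a near-linear-size implicit representation answering $O(1)$-time queries, and the same construction is applied symmetrically around each source $s\in\SSS$. Correctness with high probability then follows from the standard hitting-set argument that $B$ intersects every long path, and the running time matches the two terms of the theorem exactly through the parameter choices $|B|=\Theta(\NS\log n)$ and $L=\Theta(\NBS\log n)$.
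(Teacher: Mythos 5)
There is a genuine gap, and it sits exactly at the point your proposal passes over in one sentence: computing the source-to-pivot replacement distances $d(s,b;e)$ for all $s\in\SSS$ and $b\in B$. The Malik--Mittal--Gupta subroutine costs $\TL(m+n)$ \emph{per source--target pair}, so invoking it on the $\TL(\SI\NS)$ source--pivot pairs costs $\TL((m+n)\SI\NS)$, not $\TL(\SI n^2)$; the phrase ``together with the $n$ choices of $t$ this fits into $\TL(\SI n^2)$'' does not correspond to any actual accounting (take $m=\Theta(n^2)$ and $\SI=n$: you get $\Theta(n^4)$ versus the target $\Theta(n^3)$). This is precisely the barrier the paper identifies and then spends its entire multi-source section overcoming, by adapting Bernstein--Karger: a second hierarchy of sampled ``centers'' with priorities, a path-cover lemma splitting $sr\diamond e$ into paths through the two centers bracketing $e$ versus paths avoiding the whole interval, bottleneck edges per interval, and three separate auxiliary-graph Dijkstra constructions whose sizes are tuned to total $\TL(m\NS+\SI n^2)$. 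Your proposal has no substitute mechanism for this, so the preprocessing it relies on cannot be carried out in the claimed time.

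Two further points would also need repair. First, your combination step takes a minimum over all $\TL(\NS)$ pivots for each of the up to $\SI n^2$ triples $(s,t,e)$, which is $\TL(\SI n^2\NS)$ even with $O(1)$-time queries; the paper avoids this with a scaling trick, using a sparser pivot set $\RR_k$ of size $\TL(\NS/2^k)$ for the (at most $\TL(2^k\NBS)$) edges at distance about $2^k\NBS\log n$ from $t$, so each $(s,t)$ pair costs only $\TL(n)$. Second, your ``main obstacle'' paragraph proposes an $O(1)$-query implicit representation of $d(b,\cdot;e)$ for every pivot $b$, which is essentially a single-source fault-tolerant oracle per pivot; building $\TL(\NS)$ of these is again too expensive. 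The paper sidesteps the need for any replacement distance on the pivot-to-$t$ leg entirely: by choosing the pivot on the suffix of the detour within distance $2^k\NBS\log n$ of $t$ while $e$ is at distance at least $2^{k+1}\NBS\log n$ from $t$, the plain BFS distance $d(b,t)$ automatically avoids $e$. I would suggest restructuring your argument around that geometric fact and then confronting the source-to-pivot preprocessing honestly, since that is where the real work of the multi-source case lies.
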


The reader can see that there are two combinatorial results for the above problem. If $\SI = 1$, then we have the result of Chechik and Cohen \cite{ChechikC19}. And when $\SI= n$, we have the result of Bernstein and Karger \cite{Bernstein2009}. Our result generalizes both these results.
Additionally, we extend the conditional lower first presented in \cite{ChechikC19} by giving a combinatorial reduction
from Boolean Matrix Multiplication (\BMM) to $\MSR$ problem.

\begin{theorem}
\label{thm:second}
For a combinatorial algorithm $\MSR(n,m)$ with runtime of $T(n,m)$, there is a combinatorial algorithm for $\BMM(n,m)$  problem with runtime of $O(\NBS T(O(n),O(m)))$.
\end{theorem}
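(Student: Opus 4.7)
My approach is to exhibit a combinatorial reduction from $\BMM$ to $\MSR$ that generalizes the $\sigma=1$ reduction of Chechik and Cohen. Given Boolean $n\times n$ matrices $A,B$ with $O(m)$ ones in total, I will compute $C=A\cdot B$ using $\sqrt{n/\sigma}$ calls to $\MSR$, each on a graph with $O(n)$ vertices and $O(m)$ edges. The savings over the naive ``apply Chechik--Cohen $n/\sigma$ times'' approach come from using the $\sigma$ sources of $\MSR$ to cover, in a single call, a column slab of $C$ of width $\sqrt{n\sigma}$ rather than just $\sqrt{n}$.

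\textbf{Construction.} Set $q=\sqrt{n/\sigma}$ and $p=\sqrt{n\sigma}$, so $pq=n$. I partition the columns of $B$ into $q$ groups $G_1,\ldots,G_q$ of $p$ columns each, and subdivide every $G_g$ into $\sigma$ sub-groups $G_{g,1},\ldots,G_{g,\sigma}$ of $q$ columns each. For each $g$ I build an undirected weighted graph $H_g$ with $\sigma$ designated sources $s_1,\ldots,s_\sigma$ (one per sub-group) and $n$ target vertices $t_1,\ldots,t_n$ (one per row of $A$). The gadget is engineered so that, for every pair $(s_\ell,t_i)$, there is a canonical shortest ``spine'' path of length $q$ whose $r$-th edge is tagged with the column $k_{\ell,r}:=G_{g,\ell}[r]$, and so that the replacement distance after deleting this edge shrinks to a distinguishable value iff there exists a $j$ with $A[i,j]=B[j,k_{\ell,r}]=1$, i.e.\ iff $C[i,k_{\ell,r}]=1$. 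A single $\MSR$ call on $H_g$ therefore reveals the whole column block $C[\cdot,G_g]$, and stitching together the $q$ blocks recovers $C$. The total running time is $q\cdot T(O(n),O(m))=O(\sqrt{n/\sigma})\cdot T(O(n),O(m))$, matching the bound in the theorem.

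\textbf{Main obstacle.} The delicate step is the gadget design. I would give each spine edge a uniform weight $W$ and every ``witness'' edge (attaching vertices that represent the rows of $A$ and rows of $B$ to the spine) unit weight, choosing $W$ large enough that the shortest $s_\ell$-$t_i$ path is the spine, but small enough that a single-edge bypass through a common witness $j$ is cheaper than two spine edges, so each deletion can only be patched locally. The witness vertices encoding non-zeros of $A$ and $B$ are shared across all sources and targets; only the spines are source-specific, contributing $\sigma q=\sqrt{n\sigma}\le n$ extra vertices and similarly many extra edges, so the total graph size stays $O(n)$ vertices and $O(m)$ edges as required. The main correctness check is that the local detour around the $r$-th spine edge cannot accidentally be improved by routing through a different sub-group's witnesses or re-entering the spine at a far-away position, so that the returned replacement distance really distinguishes $C[i,k_{\ell,r}]=0$ from $C[i,k_{\ell,r}]=1$; balancing $W$ against $q$ and the number of sub-groups is the crucial calculation.
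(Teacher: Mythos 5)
Your high-level structure is right and is essentially the transpose of the paper's reduction: you partition the columns of $B$ (i.e.\ of $C$) into $\sqrt{n/\sigma}$ blocks of width $\sqrt{n\sigma}$, build one graph per block, give each graph $\sigma$ sources whose spines cover an additional $\sqrt{n/\sigma}$ factor via edge deletions, and let the $n$ targets encode the rows of $A$. The paper does the mirror image (partition rows of $A$ across graphs, targets are the $c$-vertices for columns of $C$, the source spines and their deletable edges pick out which row of $A$ is queried), and the two are interchangeable since $C^{\top}=B^{\top}A^{\top}$. The counting $\sqrt{n/\sigma}\cdot\sigma\cdot\sqrt{n/\sigma}=n$ and the $O(n)$-vertex/$O(m)$-edge budget per instance also check out at this level.

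However, the proof is not complete because the gadget — which is the entire substance of the reduction — is never actually constructed; you explicitly defer ``the crucial calculation.'' Worse, the sketch you do give of that gadget does not quite match what is needed. You describe the detour around a deleted spine edge as ``a single-edge bypass through a common witness $j$'' that ``re-enter[s] the spine,'' but a bypass that rejoins the spine cannot simultaneously depend on the target $t_i$ (which encodes the row of $A$) and on the spine position $r$ (which encodes the column) without blowing up the number of witness edges. The correct structure, which the paper uses, is that the path leaves the spine at a branch point and never returns: it goes spine $\to$ (row-of-$A$ vertex or column-of-$C$ vertex) $\to$ (shared $b$-vertex, the actual witness $j$) $\to$ target, so that the 2-edge segment $a(x)\text{--}b(y)\text{--}c(z)$ exists iff $A[x][y]=B[y][z]=1$, i.e.\ iff $C[x][z]=1$ via witness $y$. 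Deleting spine edge $r$ forces the branch to happen at a different spine vertex, and a sequence of connector paths of carefully chosen \emph{lengths} (the paper uses $2((j-1)\bmod\sqrt{n/\sigma})+1$ extra vertices on the $j$-th connector) ensures that the pre-deletion shortest path always branches as early as possible and the post-deletion shortest path branches at the next available spine vertex, making the two distances distinguishable. Note also that the problem is for \emph{unweighted} graphs, so your weight-$W$-versus-weight-$1$ scheme cannot be used as stated; the paper avoids weights entirely by using these varying-length connector paths, which is both simpler and stays within the unweighted model without further padding arguments. To turn your proposal into a proof you would need to spell out the transposed version of this unweighted gadget and verify, exactly as the paper does, that the replacement distance takes one of two clearly separated values depending on whether $C[i,k_{\ell,r}]$ is $0$ or $1$.
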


\subsection{Related Work}
Bernstein and Karger \cite{Bernstein2009} solved the $\MSR$ problem when $\SI = n$. As mentioned
previously, their aim was to build a single edge fault tolerant distance oracle of size $\TL(n^2)$
with a query time of $O(1)$. Demestrescu et al. \cite{Demetrescu2008} were the first to
design this distance oracle of size $\TL(n^2)$ and query time $O(1)$. However,
they did not specify the running time of their algorithm. Bernstein and Karger \cite{Bernstein2009}
answered this question. Other related work in this area are \cite{Bernstein2008,Chechik2010,ChechikCFK17,Duan2009}

The fault-tolerant distance oracle itself can be generalized when there is a single source
or $\SI $ many sources. Bilo et al. \cite{BiloCG0PP18} designed a distance oracle of size $\TL(\SI^{1/2} n^{3/2})$
that can answer single fault queries in $O(\NS)$ time. Gupta and Singh \cite{GuptaS18} reduced the query
time of this oracle to $\TL(1)$.

Other related problems include the fault tolerant subgraph problem.
The aim of this problem is to find a subgraph of $G$ such that the shortest path from
$s  \in \SSS$ is preserved in the subgraph after any edge deletion.
Parter and Peleg \cite{Parter2013} designed an algorithm to compute single fault tolerant subgraph with
$ O(n^{3/2})$ edges. They also showed that their result can be easily extended to multiple sources
with $O(\SI^{1/2}n^{3/2})$ edges. This result was later extended
to dual fault by Parter [16] with $ O(n^{5/3}) $ edges. Gupta and Khan \cite{GuptaK17} extended
the above result to multiple sources with $O(\SI^{1/3}n^{5/3})$ edges.
All the above results are optimal due to a result by Parter \cite{Parter2015} which states
that a multiple source $k$ fault tolerant  subgraph requires $O(\SI^{\frac{1}{k+1}}n^{2-\frac{1}{k+1}})$
edges. There is only one positive result known
for a general $\SI$, Bodwin et al. \cite{BodwinGPW17} showed the existence of a $k$ fault tolerant
subgraph of size $O(k \SI^{1/2^k} n^{2-1/2^k} )$.

\section{Previous Approach : Chechik and Cohen \cite{ChechikC19}}
\setlength\intextsep{0pt}
\begin{wrapfigure}[12]{r}{.30\textwidth}
\centering
\begin{tikzpicture}[scale=1.5]

\definecolor{dgreen}{rgb}{0.0, 0.5, 0.0}

\begin{scope}[xshift=0cm]
\coordinate (s) at (0,2);
\coordinate (v) at (0,0);
\coordinate (c) at (0,.5);
\coordinate (b) at (0,3);
\coordinate (a) at (0,1.5);
\coordinate (d) at (-0.3,1);
\coordinate (d1) at (-0.3,1.1);
\coordinate (e1) at (0,1);
\coordinate (e2) at (-0.2,.65);

%{\scriptsize 

\draw[thick](v)--(s);
\node[above] at (s){$s$};
\node[below] at (v){$t$};
\node[right] at (a){$a$};
%\node[right] at (b){$b$};
\node[right] at (c){$c$};

%\node[right] at (f){$f$};
%\node[right] at (g){$g$};

\draw[blue,thick] (a) to[out=170,in=170] node[pos=0.5,left]
{\scriptsize  $\SUF(P)$}  (c);
\node at (e1){$\times$};
\node[right] at (e1){$e$};

\end{scope}

\end{tikzpicture}

\caption{ $\SUF(P)$ starts with the  blue path  from $a$. It merges back to $st$ path at $c$ and continues   till $t$.
}

\label{fig:suffix}
\end{wrapfigure}
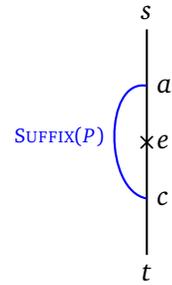

Before we dive into our approach, let us look at the previous approach to the problem. To this end, let us first define few terms.
\begin{enumerate}
\item Let $st$ denote the shortest path from $s$
to $t$ in $G$.

\item  Let $P$ be the shortest
replacement path
from $s$ to $t$
avoiding an edge $e$ on the $st$ path.
$\SUF(P)$ denotes
the suffix
of $P$ from the point it leaves the original
$st$ path.

\item {\em Landmark vertices:}
Let  $\RR$ be a set obtained by sampling each vertex in $G$
with a probability of $\frac{1}{\sqrt n}$.

\end{enumerate}
One can show that $\RR$ contains $\TL(\sqrt
n)$ vertices
 with high probability. We can also show
that if $\SUF(P)$ contains $\TL(\sqrt n)$
vertices, then with a high probability, $\SUF(P)$
contains a  landmark vertex.

We now describe the result of Chechik and Cohen \cite{ChechikC19} in detail. In  \cite{ChechikC19}, the authors solved the $\SSR$ problem using
the following important observation from \cite{AfekBKCM02}: ``For any replacement path from $s$ to $t$ avoiding $e$, there exists a vertex $u$ such that the replacement path  can be
broken into two shortest paths in $G$ (1) $su$ path and (2) $ut$ path''. Note that the result is non-trivial as these two paths are shortest paths in
the original graph -- that is none of these paths pass through $e$ in $G$.
If we are able to find the vertex $u$, then we can easily find the replacement path.
Unfortunately, finding $u$ is not easy (in the running time we want to achieve).

We now try to explain how Chechik and Cohen \cite{ChechikC19} overcome this
problem using landmark vertices.
Chechik
and Cohen \cite{ChechikC19} showed the following:   if $\SUF(P)$ is sufficiently long, say $\TL(\sqrt n)$, then it  contains two landmark vertices $u$ and $v$ such that $P$ can be
broken into three parts (1) $su$ path (2) $uv$ {\em path} and (3) $vt$ path. Again all these paths
are shortest paths in $G$ and do not pass through the edge to be avoided.
Since the number of landmark vertices is only $\TL(\sqrt n)$, finding $u$ and $v$
becomes slightly easy. Once we have the above result,
we can use the following simple algorithm (See Algorithm \ref{alg:che}) to find the shortest path from $s$ to $t$ avoiding $e$ (assuming that $\SUF(P)$ is sufficiently long).

\begin{algorithm}

$d(s,t,e) \leftarrow \infty$; \tcp{$d(s,t,e)$ will be equal to $|P|$ at the end of the algorithm.}

\For{each pair $u,v \in \RR$}
{ \If{$su, uv$ and $vt$ avoids  $e$ }
   {

        $d(s,t,e) \leftarrow \min\{d(s,t,e),
  |su| + |uv| + |vt|\}$;
   }
}

return $d(s,t,e)$
\caption{Algorithm for finding a replacement
path $P$  from $s$ to $t$ avoiding $e$ assuming that $\SUF(P)$ is long}
\label{alg:che}
\end{algorithm}

The reader can check that the time taken by Algorithm \ref{alg:che} for a fixed $(t,e)$ pair
 is $\TL(n)$ (assuming that checking if $su, uv$ and $vt$ contain $e$ takes $O(1)$ time).
 %Thus, for a single $(t,e)$ pair, the algorithm takes $\TL(n)$ time.\
Since there are $O(n^2)$ such pairs, the algorithm takes $\TL(n^3)$ time. However, remember that Chechik
and Cohen \cite{ChechikC19} solve the $\SSR$ problem in $\TL(m\sqrt n + n^2)$ time.
Thus, we need to reduce this running time from $\TL(n^3)$ to  $\TL(n^2)$.
The above simple algorithm is the heart of their paper (``the double pivot case'') and reducing the time to $\TL(n^2)$
require some more modification to the simple algorithm -- we donot describe this modification as it is technically heavy and
orthogonal to the approach used by us. We refer the interested reader to  \cite{ChechikC19} for details.

\section{Our Approach}
We now describe our approach for the case when there is only one source, that is $\SI = 1$. In between our explanation, we will also point out the differences between \cite{ChechikC19}
and our result.

We will use the classical result of \cite{MalikMG89,Hershberger2001,NardelliPW03}, that  can find the
replacement path from $s$ to $t$ in $\TL(m+n)$ time. An inefficient algorithm will
then be to run this algorithm for each $s,t$ pair where $t \in V$, giving a running time
of $\TL(mn)$. Since we want a better running time, we restrict
the use of the result of \cite{MalikMG89, Hershberger2001,NardelliPW03} to only landmark vertices $\RR$. Thus, we find
the replacement path from $s$ to vertices in  $\RR$. Since there are
$\TL(\sqrt n)$ landmark vertices, the total time taken is $\TL(m\sqrt n + n\sqrt n)$.
We now use the set $\RR$ to find replacement paths for all other vertices. We first note the first difference between \cite{ChechikC19} and our result, the use of landmark vertices is completely different from that in \cite{ChechikC19}.
We use the classical result to find the replacement path between $s$ to all
landmark vertices, unlike \cite{ChechikC19} where the properties of landmark vertices are used (in Algorithm \ref{alg:che}).

We now describe our strategy to find a replacement path $P$ from
$s$ to $t$ avoiding $e$ if $\SUF(P)$ is sufficiently long.
%replacement paths are sufficiently long (we defer the case when the replacement paths
%are short). Fix a $t \in V$. We will now find all the replacement path from $s$ to $t$
%which are sufficiently long. Let us formally define paths that are sufficiently long.
%Let $P$ be a path from $s$ to $t$ avoiding an edge on $st$ path.
%Let $\SUF(P)$ be the suffix of the replacement path $P$ from the vertex it diverges
%from the $st$ path.
We claim  the following results (which we show using Lemma \ref{lem:rexists}):
If $\SUF(P)$ is of length $\TL(\sqrt n)$, then
with a high probability a landmark vertex, say $v$, will lie on $\SUF(P)$
such that $vt$ path does not contain $e$. Note that this result is similar to the one
used by Chechik and Cohen \cite{ChechikC19} -- here we are  arguing about $vt$ path only.
Thus, our algorithm to find replacement path (whose suffix is sufficiently long) is as follows:

\begin{algorithm}

$d(s,t,e) \leftarrow \infty$; \tcp{$d(s,t,e)$ will be equal to $|P|$ at the end of the algorithm.}

\For{each $v \in \RR$}
{ \If{$vt$ avoids  $e$ }
   {
        Let $p$ be the length of the replacement path from $s$ to $v$ avoiding $e$ \;

        $d(s,t,e) \leftarrow \min\{d(s,t,e),
  p + |vt|\}$;
   }
}

return $d(s,t,e)$
\caption{Algorithm for finding a replacement
path $P$  from $s$ to $t$ avoiding $e$ assuming that $\SUF(P)$ is long}
\label{alg:2}
\end{algorithm}
%Thus, we will first find all replacement paths from $s$ to vertices in $\RR$,
%this takes $\TL(m \sqrt n)$ time using \cite{MalikMG89,Hershberger2001,NardelliPW03}.
%We now use this knowledge to find the length of $P$, that is $P$ is
%the concatenation of the shortest path from $s$ to $r$ (for some $r \in \RR$) avoiding $e$ and
%$rt$ path.
In the above algorithm, we use the fact that we have already found replacement paths from $s$ to all
the landmark vertices.
Even though we know that some landmark vertex lies on $\SUF(P)$, we do not know which one.
Thus, we have to scan all of $\RR$ to find the vertex which lies in $\SUF(P)$.
This process itself takes $\TL(\sqrt n)$ time. This implies that it would take $\TL(n \sqrt n )$
time to find all replacement paths of $t$, which in turns implies an $\TL(n^2\sqrt n)$
running time. Since we cannot afford such a running time, we use the following scaling trick.
This scaling trick is the second difference between \cite{ChechikC19} and our paper.
This trick greatly simplifies the algorithm as well as the analysis
in our paper.

We look at an edge $e$ at a distance of $[2^k, 2^{k+1}]$ from $t$ on $st$ path
(assume that $2^k \ge \sqrt n \log n$).
One can argue that the suffix of the replacement path, say $P$,
avoiding $e$ will have length $\ge 2^k$. This is because $e$ itself is at a
distance $\ge 2^k$ from $t$.
Since $|\SUF(P)| > \TL(\sqrt n)$, we know that one of our landmark vertex, say $v$, lies on $\SUF(P)$.
If we can find $v$ then we can use Algorithm \ref{alg:2} to find the length of $P$.
At the same time, we don't have enough time to look at all the landmark vertices
to find $v$. Here comes our main idea.
Since $|\SUF(P)| \ge 2^k$, we can choose a smaller set of landmark vertices $\RR_k$
of size $\TL(\frac{n}{2^k})$ (a set where each vertex is sampled with probability $\frac{1}{2^k}$).
We show (using Lemma \ref{lem:rexists}) that
there exists a landmark vertex $v \in \RR_k$ such that $v$ lies on $\SUF(P)$
and $vt$ path avoids $e$.
%Thus, one of the vertices of $\RR_k$ lies on the suffix of the replacement path.
Since, we have already found all replacement paths for vertices in $\RR_k$
(using \cite{MalikMG89,Hershberger2001,NardelliPW03}),
we can use $\RR_k$ (instead of $\RR$) to find replacement paths to
$t$ when the edges are in the range $[2^k,2^{k+1}]$.

Thus, we use the same algorithm in Algorithm \ref{alg:2}, but use
the landmark set $\RR_k$ for edges that are at a distance of $[2^k, 2^{k+1}]$
from $t$.
As the value of $k$
increases the number of vertices in $\RR_k$
decreases. The end effect is that we take the same amount of time to process edges
 in the range $[2^k,2^{k+1}], [2^{k+1},2^{k+2}], [2^{k+2},2^{k+3}], \dots ,[n/2,n]$ from $t$.
In general, the reader will see that it will take $\TL(n)$ time to process all the edges in
any given range. This would imply a running time of $\TL(n)$ for finding all
replacement paths of $t$ whose suffix is long. This approach reduces the running time from $\TL(m\sqrt n + n^2\sqrt n)$
to $\TL(m\sqrt n + n^2)$.

Note that we cannot use the above approach for edges which are
near to $t$ in $st$ path (that is, those edges whose distance from $t$ is $\le \sqrt n \log n$)
or those replacement paths that have short suffixes.
It turns
out that dealing with these replacement paths is relatively easy. Even
Chechik and Cohen \cite{ChechikC19} have a simple approach for these paths.
In our paper, we design another algorithm
to deal with these replacement paths. This completes the description
of our algorithm for the single source case.

The above approach gives us a simple algorithm for finding replacement paths
from a single source to all vertices. We then extend our result to multiple sources.
Ideally, we would have liked to use the result of \cite{MalikMG89, Hershberger2001,NardelliPW03}
to find all replacement paths from all sources to vertices in $\RR$.
However, this does not give us the required running time.
To overcome this barrier, we show that we can adapt the result of
Bernstein and Karger \cite{Bernstein2009}
 to find all replacement paths between all sources and vertices in $\RR$ (in the required running time).
This completes the overview of our approach.

\section{Notation}
We use the following notation throughout the paper:

\begin{itemize}

%\item    We use a standard
%technique of perturbing edge with small weights \cite{Bernstein2009,Parter2013,Hershberger2001}
%such
%that the weight of the shortest path is unique (in the perturbed
%graph). Let $G_p$ denote the graph where each edge is perturbed
%by a weight function that ensures a unique shortest path\footnote{The perturbation also ensures that the shortest path contains minimum number of edges in the original graph.}.
%Henceforth, we will work only with $G_p$ and assume that the length of the shortest
%path in the perturbed graph is unique, even those shortest paths that avoid an edge.
%For brevity, we will drop the subscript $p$ from $G_p$ henceforth.
%For each edge $e \in G$, $w(e)$ will denote the weight of $e$.

\item Unless stated otherwise, $uv$ will denote the shortest path between
the vertex $u$ and $v$ in the graph $G$. $|uv|$ denotes the length of this shortest path.

\item Let $e = (u,v)$ be an edge on $st$ path such that $u$ is closer to $s$ than $v$. We will abuse notation and use $se$ to denote $su$ path and $et$ to denote $vt$ path.
%Since
%we are dealing with an unweighted graph, this denotes the
%number of edges between $u$ and $v$.

\item $st \diamond e$ is the shortest path from $s$ to $t$ avoiding $e$. $|st \diamond e|$ is the length of this shortest path.

\item $uv + vy$ denotes the concatenation of two paths, one ending at $v$ and other starting at $v$.

\item Let $P$ be a path from $s$ to $t$, not necessarily the shortest path. A sub-path $uv$ on $P$
will be denoted by $P[u,v]$.

\item The shortest path tree of a vertex $v \in V$ is denoted by $\TT_v$. The shortest path tree can be built by performing Breadth First Search (\BFS) algorithm from $v$.

\item $d(s, t, e)$: In our algorithm, we want to find the shortest replacement path from each source to each vertex. $d(s,t,e)$ is the length of the replacement path (from $s$ to $t$ avoiding $e$) calculated by our algorithm. We will normally initialize $d(s,t,e) = \infty$ and prove that at the end of the algorithm $d(s,t,e) = |st ~\diamond~ e|$.
Also, $d(s,t)$ will denote the length of the shortest $st$ path. Formally, $d(s,t) = d(s,t, \emptyset) = |st|$.

\item The term {\em with a high probability} means with a probability  $\ge 1 - \frac{1}{n^c}$ where $c \ge 1$.

\end{itemize}

\section{Preliminaries}

We first sample a set of random vertices which we call as landmark vertices.

\begin{definition}
\label{def:landmark} (Landmark vertices)
Let  $\RR_{k}$ be a set of vertices sampled randomly from
$G$
with a probability of $\frac{4}{2^k}\sqrt{\frac{\SI}{n}}$
where $0\le k \le\log \NS$. Let $\RR = \cup_{k=0}^{\log \NS}
\RR_k$.
Along with these vertices, $\RR$  also contains all source
nodes.
\end{definition}

 The following lemma bound the number of vertices in $\RR$.

\begin{lemma}
\label{lem:sizeR}
The size of $\RR_k$ is $\TL(\frac{\NS}{2^k})$ with a very high
probability. Thus, the size of $\RR$ is $ \TL(\NS)$ with
a very high probability.
\end{lemma}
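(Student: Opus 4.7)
The plan is to apply a Chernoff bound to each $\RR_k$ individually and then use a union bound over the $O(\log n)$ values of $k$ (and finally sum the geometric series for the bound on $\RR$). Since each vertex is included in $\RR_k$ independently with probability $p_k = \frac{4}{2^k}\sqrt{\SI/n}$, the size $|\RR_k|$ is a sum of $n$ independent Bernoulli variables with mean $\mu_k = n p_k = \frac{4\NS}{2^k}$. So the claim amounts to showing that $|\RR_k|$ is, up to polylog factors, at most a constant multiple of its expectation with probability at least $1 - 1/n^c$.

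The proof splits into two regimes according to the size of $\mu_k$. When $\mu_k \ge c_1 \log n$ for a sufficiently large constant $c_1$, the multiplicative Chernoff bound $\Pr[|\RR_k| > 2\mu_k] \le \exp(-\mu_k/3)$ immediately gives $|\RR_k| = O(\mu_k) = O(\NS/2^k)$ with probability at least $1 - 1/n^{c_1/3}$. When $\mu_k < c_1 \log n$ (this occurs only for the largest values of $k$, when the sampling probability is tiny), the standard concentration around the mean is too weak, but we can still conclude $|\RR_k| = O(\log n)$ with high probability, for instance from the tail bound $\Pr[|\RR_k| \ge t] \le \binom{n}{t} p_k^t \le (e\mu_k/t)^t$, which for $t = c_2 \log n$ with $c_2$ large enough is at most $1/n^c$. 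In both regimes we obtain $|\RR_k| = O(\mu_k + \log n) = \TL(\NS/2^k)$.

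Applying a union bound over the $O(\log \NS)$ values of $k$ from $0$ to $\log \NS$, all the bounds hold simultaneously with probability at least $1 - O(\log n)/n^c = 1 - 1/n^{c'}$. To bound $|\RR|$, I just sum:
\[
|\RR| \;\le\; \SI + \sum_{k=0}^{\log \NS} |\RR_k| \;=\; \SI + \TL(\NS) \sum_{k=0}^{\log \NS} 2^{-k} \;=\; \TL(\NS),
\]
where the geometric series is $O(1)$ and the additive $\SI$ from the source nodes is absorbed since $\SI \le \sqrt{n \SI} = \NS$ whenever $\SI \le n$.

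The only mildly subtle point, and the main obstacle worth flagging, is the small-mean regime at $k$ near $\log \NS$: a naive multiplicative Chernoff bound does not give tight concentration when $\mu_k = O(1)$, which is exactly why the statement is phrased with $\TL(\cdot)$ rather than $O(\cdot)$. The binomial tail estimate above handles this case cleanly, and the extra $\log n$ factor hidden in $\TL$ is precisely what makes the bound uniform across all $k$.
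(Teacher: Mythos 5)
Your proof is correct and follows essentially the same route as the paper: a Chernoff bound on each $|\RR_k|$ followed by a union bound over the $O(\log \NS)$ values of $k$. The only cosmetic difference is that you split into large-mean and small-mean regimes (using a binomial tail estimate for the latter), whereas the paper handles both uniformly by taking the Chernoff deviation parameter $\delta = \log n$, which buys the same polylogarithmic slack that the $\TL(\cdot)$ notation absorbs.
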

\begin{proof}
  Let $X_k$ be a random varaible denoting the size of $\RR_k$.
  The expected size of $X_k$ is, $\E[X_k] = \frac{4\NS}{2^k}$.
  Using Chernoff's bound, we know that $P[X_k \ge (1+ \delta)\E[X_k]] \le e^{-\frac{\delta \E[X]}{3}}$
  where $\delta \ge 1$.
  Putting $\delta = \log n$, we get
  $P[X_k \ge (1+ \log n)\frac{4\NS}{2^k}] \le e^{-\frac{4\log n \NS}{3 \times 2^k}}$.
  Since $2^k$ can at most be $\NS$, we get $P[X_k \ge (1+ \log n)\frac{4\NS}{2^k}] \le e^{-\frac{4 \log n}{3}} = n^{-4/3}$.
  Using union bound, the probability that the size of $\RR$ is $ \ge \TL(\NS)$ is $\le (\log \NS) \times n^{-4/3} \le n^{-1}$
  (where the last ineqality is true for a high enough value of $n$).
\end{proof}

Let us first build some elementary data-structure that
will help us in our algorithm. Using Breadth First Search (\BFS) algorithm,
we can find the shortest
path from $s$ to all other vertices in $G$ in $
O(m+n)$
time. We will assume that at the end of $\BFS$ algorithm,
we will find the distance from $s$ to every other vertex
in $G$, that is $d(s,v) = |sv|$.  Also, assume that we obtain the shortest
path tree of s, that is $\TT_s$, as the
output of $\BFS$  algorithm. We store  $d(s,v)$
in a hash-table for efficient retrieval. To this end, we
use the following data structure:
\begin{lemma}
\label{lem:cuckoo}\text{(Pagh and Rodler \cite{PaghR04})}
There exists a randomized hash-table with constant look-up
time in the worst case and constant insertion time in expectation.
\end{lemma}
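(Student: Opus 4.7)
The plan is to realize the hash table using the two-table cuckoo hashing construction of Pagh and Rodler. We maintain two arrays $T_1, T_2$, each of size $(1+\epsilon) N$ for an upper bound $N$ on the number of keys we ever store, together with two hash functions $h_1,h_2$ drawn independently from a suitably strong random family (for instance, an $O(\log N)$-wise independent family as used in \cite{PaghR04}). The invariant maintained is that every stored key $x$ resides either in $T_1[h_1(x)]$ or in $T_2[h_2(x)]$.

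Given this invariant, lookup is immediate and achieves the worst-case guarantee: to look up $x$, probe $T_1[h_1(x)]$ and $T_2[h_2(x)]$ and compare. This is exactly two memory probes regardless of load, giving $O(1)$ worst-case lookup time. For insertion of a key $x$, we first try to write $x$ into $T_1[h_1(x)]$; if the slot is empty, we are done. Otherwise we evict the current occupant $y$, place $x$ there, and recursively try to insert $y$ into $T_2[h_2(y)]$, alternating tables whenever an eviction occurs. If the length of this eviction chain exceeds a threshold of $\Theta(\log N)$, we abort and rehash the entire structure with freshly drawn $h_1, h_2$.

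The hard step is bounding the expected amortized insertion cost by $O(1)$. The standard argument models the current state as a bipartite ``cuckoo graph'' whose vertices are the cells of $T_1 \cup T_2$ and whose edges are the pairs $\{h_1(x), h_2(x)\}$ for each stored key $x$; an insertion traces an alternating walk in this graph. The key estimates are: (i) as long as the load factor is bounded strictly below $1/2$, the expected length of the insertion walk starting at a random edge endpoint is $O(1)$, using a path-counting argument over the random hash functions; and (ii) the probability that a single insertion triggers a rehash is inverse polynomial in $N$, so that the amortized contribution of the $O(N)$ rebuild steps is $o(1)$ per operation. Combining (i) and (ii) yields $O(1)$ expected insertion time. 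Since the full combinatorial analysis of the cuckoo graph is technically involved but completely standard, we invoke \cite{PaghR04} directly rather than reproduce it here.
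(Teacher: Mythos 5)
The paper offers no proof of this lemma --- it is stated purely as a citation to Pagh and Rodler's cuckoo hashing result --- and your sketch is an accurate summary of exactly that construction and its analysis (two tables, two hash functions, eviction walks, rehash on long chains, constant worst-case lookup and expected amortized constant insertion). So your proposal is correct and takes essentially the same approach as the paper, namely deferring the combinatorial analysis to \cite{PaghR04}.
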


For each landmark vertex $r$, we find the shortest path
from $r$
to every other vertex in $G$. This can again be done using
$\BFS$ algorithm and the total running time is $\tilde
O((m+n) \NS)$ as the number of landmark vertices  is $\tilde
O(\NS)$. We  store the length of the shortest path from
$r$ to
every other vertex $v$, that is $d(r,v)$, in a hash-table.

Using the result of \cite{MalikMG89, Hershberger2001, NardelliPW03}, we
know that all  replacement
paths from $s$ to a landmark vertex $r$ can  be found in $\tilde
O(m+n)$
time. If there is only one source, that is $\SI =1$, then
 we can use this result to find all replacement paths
between the single source $s$ and all landmark vertices.
 For a single source case, since  the number of landmark
vertices
is $\TL(\sqrt n)$, the
time taken to find all  replacement paths is $\tilde
O((m+n)\sqrt
n)$.

However, when there are many sources, the above
strategy gives a running time of $\TL((m+n)\SI\NS)$ where
the second and third multiplicand represent the number of
sources and the number landmark vertices. We cannot afford
such a huge running time.
So, we adapt the result of Bernstein and Karger \cite{Bernstein2009}
and show that it can be used to find all replacement paths from all sources to all landmark vertices in $\TL(m\NS+\SI
n^2)$ time. Given this result (which we will show in Section
\ref{sec:general}),
for each $s \in \SSS$ and \ $r \in
\RR$, we store $d(s,r,e)$ in a hash-table for each $e \in
sr$ path.
%We will describe this result in Section \ref{sec:general}.

Lastly, we used the following classical result to compute
least common ancestors quickly:

\begin{lemma} \text{(See \cite{BenderF00} and its references)}
Given any tree $\TT_v$ (on $n$ vertices) rooted at $v$, we can
build a data-structure
of size $O(n)$ in $O(n)$ time which can find least common
ancestor \textit{i.e.}
$\LCA(x, y)$ where $x, y \in \TT_v$ in $O(1)$ time.
\end{lemma}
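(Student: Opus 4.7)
The plan is to follow the classical Bender--Farach-Colton reduction, which transforms LCA queries on $\TT_v$ into range minimum queries (RMQ) on an integer array obtained from an Euler tour of $\TT_v$. First I would perform a DFS traversal of $\TT_v$, recording each vertex at every entry and exit; this produces an Euler tour array $E$ of length $2n-1$, a parallel depth array $D$, and a first-occurrence table $F[x]$ for each vertex $x$. All three can be built in $O(n)$ time and take $O(n)$ space. The key observation is that for any $x,y \in \TT_v$ with $F[x]\le F[y]$, $\LCA(x,y)$ is exactly the vertex $E[p]$, where $p$ minimises $D$ over the range $[F[x],F[y]]$. Thus LCA on $\TT_v$ reduces to RMQ on $D$, with the crucial side property that $|D[i+1]-D[i]|=1$ for all $i$.

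Next I would build an RMQ data-structure on this $\pm 1$ array $D$ achieving $O(n)$ space, $O(n)$ preprocessing, and $O(1)$ query. I partition $D$ into blocks of size $b=\tfrac{1}{2}\log n$, store the position of the minimum of each block, and build a sparse table over the resulting $O(n/\log n)$ block minima; this sparse table occupies $O((n/\log n)\log n)=O(n)$ space and answers block-aligned queries in $O(1)$. Within-block queries exploit the $\pm 1$ property: after subtracting the first element of a block, each block is determined by a sign pattern of length $b-1$, so there are at most $2^{b-1}=O(\sqrt n)$ distinct patterns. For every such pattern and every pair of positions inside a block I precompute the position of the minimum; this lookup table has $O(\sqrt n \cdot \log^2 n) = o(n)$ entries and can be filled within the same time budget.

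Finally, a query $\LCA(x,y)$ is answered by setting $i=F[x]$, $j=F[y]$ (swap so that $i\le j$), locating the blocks containing $i$ and $j$, and taking the argmin of $D$ over three candidates: the suffix of the left block starting at $i$, the prefix of the right block ending at $j$, and the minimum over the blocks strictly between them retrieved from the sparse table. Each lookup is $O(1)$, the winning vertex is then read off from $E$ in $O(1)$, giving $O(1)$ total query time. The main technical obstacle is the within-block construction: a naive tabulation over arbitrary blocks would explode in space, and the argument genuinely relies on the $\pm 1$ structure of $D$ (which is guaranteed only because $D$ is the depth sequence of an Euler tour) to cap the number of distinct block patterns at $O(\sqrt n)$. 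Combining the Euler tour, the sparse table over block minima, and the block-pattern lookup table yields the claimed $O(n)$-time construction, $O(n)$ space, and $O(1)$ query time.
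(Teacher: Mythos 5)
Your proof is correct and is precisely the Bender--Farach-Colton Euler-tour/\(\pm 1\)-RMQ construction that the paper cites (\cite{BenderF00}) without reproducing; the paper gives no proof of its own, so your argument matches the intended reference exactly.
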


Remember that we want to find  all replacement paths
from $s$ to $t$ for each  $s \in \SSS$ and $t \in V$. Fix
a  $s \in \SSS$ and $t \in V$.
We partition the edges on the  $st$ path into two sets,
{\em far} and {\em near}.

\begin{itemize}
\item ($k$-Far Edges) Edges which are at a distance $[2^{k+1}
\NBS \log n, 2^{k+2} \NBS \log n]$ away from $t$ on $st$
path (where $0 \le k\le \log \NS)$.%Edges in $st_s$.
\item (Near Edges) Edges  which are at a distance $<2 \NBS
\log n$ away from $t$ on $st$ path.\end{itemize}

% !TEX root = paper.tex

\section{Far Edges}
\label{sec:far}
Fix a source $s \in \SSS$ and $t \in V$. Assume that we are trying to find a replacement path
for a $k$-far edge $e$ on $st$ path.
%To this end, we first find its corresponding
%$t_{sk}$
%(that we have stored in the hash-table).
Since the replacement path avoids $e$,
it has to diverge from the $st$ path before $e$.
We now again look at the suffix of a replacement path and describe its properties.
\begin{definition} Let $P$ be the shortest replacement path
from $s$ to $t$
avoiding an edge $e$ on the $st$ path. Then, $\SUF(P)$ denotes
the suffix
of $P$ from the point it leaves the original $st$ path.

\end{definition}

Since $P$ avoids $e$ on $st$ path, it has to diverge from
this path before edge
$e$. $\SUF(P)$ is the sub path of $P$ that starts from this
diverging vertex. We now make an important observation about the
length of $\SUF(P)$.

\begin{observation}
\label{obs:sufflong}
If $P$ is the shortest replacement path from $s$ to $t$ avoiding a
$k$-far edge $e$ on $st$ path, then $|\SUF(P)| >2^{k+1} \NBS \log n$.
\end{observation}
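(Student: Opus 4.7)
The plan is to exploit two facts: (i) the divergence vertex of $\SUF(P)$ must lie strictly on the $s$-side of $e$ on the $st$ path, and (ii) any path from this divergence vertex to $t$ in $G$ has length at least $|at|$, which is forced to be large because $e$ itself is far from $t$.

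First, I would let $a$ be the vertex at which $P$ leaves the $st$ path, so that $\SUF(P)$ is an $a$-to-$t$ path (possibly re-merging with $st$, as in Figure \ref{fig:suffix}). Because $P$ avoids the edge $e$, the vertex $a$ must appear on the $st$ path strictly before $e$; in particular, writing $e = (u,v)$ with $u$ closer to $s$, the sub-path of $st$ from $a$ to $t$ passes through $u$ and then uses the edge $e$.

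Next, since $st$ is a shortest path in $G$ and $a$ lies on it, the $G$-distance $|at|$ equals the length of the $st$-subpath from $a$ to $t$. This subpath has length at least $1 + |et|$ (one edge from $u$ to $v$ plus the $vt$ portion, and $a$ may be even earlier than $u$). By the definition of a $k$-far edge, $|et| \ge 2^{k+1}\NBS \log n$, so $|at| \ge 1 + 2^{k+1}\NBS \log n$.

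Finally, $\SUF(P)$ is an $a$-to-$t$ walk in $G$ (it even avoids $e$, but we don't need that here), so $|\SUF(P)| \ge |at| > 2^{k+1}\NBS \log n$, which is the desired bound. There is no real obstacle: the only mild subtlety is remembering that the definition of $\SUF(P)$ continues all the way to $t$ (not just up to the re-merging vertex $c$), and that one should measure the distance from $a$ rather than from $e$, which together with the $k$-far distance range immediately yields the inequality.
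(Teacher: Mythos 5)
Your proposal is correct and is essentially the paper's own (one-line) argument, just written out in full: the suffix starts at a divergence vertex strictly before $e$, so its length is at least the $G$-distance from that vertex to $t$, which exceeds $2^{k+1}\NBS\log n$ by the definition of a $k$-far edge. No issues.
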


The above observation holds because of the following simple argument: as $\SUF(P)$ start before $e$ on $st$ path, its length
should be $\ge$ length of $et$ path.
We now claim that there exists a vertex of $\RR_{k}$ on $\SUF(P)$. This can be shown easily
using elementary probability.

\begin{lemma}
\label{lem:rexists}
Let $\PP$ be the set of all replacement paths from $s \in \SSS$ to $t \in V$ that avoid a far edge. Given any path $P \in
\PP$ such that $P$ avoids a $k$-far  edge $e$ on $st_{}$ path, with a high probability there exists a vertex
 $r \in \RR_{k}$ on $\SUF(P)$ such that the distance of $r$
to $t$ on $\SUF(P)$ is $\le 2^k\NBS \log n$.
\end{lemma}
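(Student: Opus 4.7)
The plan is to combine Observation~\ref{obs:sufflong} with the independent sampling underlying $\RR_k$ to show that the last stretch of $\SUF(P)$ (the part within distance $2^k\NBS \log n$ of $t$) is so long that it must contain a sampled vertex with overwhelming probability. Concretely, I would fix a pair $s\in\SSS$, $t\in V$, an edge $e$ that is $k$-far on the $st$ path, and the associated replacement path $P\in\PP$. By Observation~\ref{obs:sufflong} we have $|\SUF(P)| > 2^{k+1}\NBS\log n$, so I can carve out the subpath $Q\subseteq \SUF(P)$ consisting of the vertices at distance at most $2^k\NBS\log n$ from $t$ along $\SUF(P)$. Since shortest replacement paths may be taken to be simple, the vertices of $Q$ are distinct and $|V(Q)| \ge 2^k\NBS\log n$; set this vertex set aside as the candidates for $r$.

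Next I would bound the probability that no candidate is selected. Each vertex is placed in $\RR_k$ independently with probability $p_k = \tfrac{4}{2^k}\sqrt{\SI/n} = \tfrac{4}{2^k\NBS}$, so
\[
\Pr[\,\RR_k\cap V(Q)=\emptyset\,] \;\le\; (1-p_k)^{|V(Q)|} \;\le\; \exp\!\bigl(-p_k\cdot 2^k\NBS\log n\bigr) \;=\; \exp(-4\log n) \;=\; n^{-4}.
\]
Any vertex produced by the complementary event is a landmark $r\in\RR_k$ that lies on $\SUF(P)$ and, by construction of $Q$, has distance at most $2^k\NBS\log n$ to $t$ along $\SUF(P)$, which is exactly the desired conclusion for this fixed $P$.

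Finally, I would amplify the per-path guarantee to a statement that holds uniformly over $\PP$ by a union bound. The graph $G$, the source set $\SSS$, and the edge set are all deterministic, so $\PP$ is a deterministic family of paths whose size is at most $|\SSS|\cdot|V|\cdot n \le \SI\, n^2 \le n^3$. Union-bounding the $n^{-4}$ failure estimate across all triples $(s,t,e)$ (and, if desired, across the $O(\log n)$ scales $k$) gives total failure probability at most $n^{-1}$, which is the ``with high probability'' asserted in the lemma.

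The argument is essentially a Chernoff-style coupon-collector calculation, so no genuine obstacle arises; the only points that need a little care are (i) verifying that $\SUF(P)$ is simple, so that the last $2^k\NBS\log n$ edges furnish that many distinct sampling candidates, and (ii) noting that the randomness lives entirely in $\RR_k$ while $\PP$ is deterministic, which is what legitimizes the union bound over an otherwise $P$-dependent event.
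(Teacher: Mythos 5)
Your proposal is correct and follows essentially the same argument as the paper: invoke Observation~\ref{obs:sufflong} to get $|\SUF(P)| > 2^{k+1}\NBS\log n$, bound the probability that the last $2^k\NBS\log n$ vertices of $\SUF(P)$ all miss $\RR_k$ by $(1-\tfrac{4}{2^k\NBS})^{2^k\NBS\log n}\le n^{-4}$, and union-bound over the at most $\SI n^2\le n^3$ paths in $\PP$. Your explicit remarks on the simplicity of $\SUF(P)$ and on $\PP$ being deterministic relative to the sampling are minor clarifications the paper leaves implicit.
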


\begin{proof}
Fix a $P \in \PP$. Since $P$ avoids a $k$-far edge on $st$ path,
by Observation \ref{obs:sufflong}, $|\SUF(P)| > 2^{k+1} \NBS \log n$.
Let $X_P$ be the event that there does not exist a vertex of $\RR_{k}$
at a distance $\le 2^k\NBS \log n$ from $t$ on $\SUF(P)$.
Then the probability that $X_P$ occurs is
$P[X_p] = (1-\frac{4}{2^k}\sqrt{\frac{\SI}{ n}})^{2^{k}\NBS \log n} \le \frac{1}{n^4}$.
Note that the size of $\PP$ is $\le n^2\SI \le n^3$. Thus, the probability
that $X_P$ occurs for any $P \in \PP$ is $P[\cup_{P \in \PP} X_P] \le \frac{1}{n}$.
%Thus, the probability that the event $X_p$ does not occur for any $P \in \PP$ is $\ge 1 - \frac{1}{n}$.
\end{proof}

We now use a simple algorithm (See Algorithm \ref{alg:que}) to find the shortest
replacement path from $s$ to $t$ avoiding a $k$-far edge $e$.

\begin{algorithm}
%Let $k \leftarrow \lceil \frac{1}{\log n} \sqrt{\frac{\SI}{n}} \log ((d(s,t)-d(s,e)))\rceil$\;
%Find $t_{sk}$\;
$d(s,t,e) \leftarrow \infty$\;

\For{each $r \in \RR_{k}$}
{

   \If{$d(r,t) \le 2^k \NBS \log n$}
   {

        $d(s,t,e) \leftarrow \min\{d(s,t,e), d(s,r,e)+d(r,t)\}$;
   }
}

return $d(s,t,e)$
\caption{Algorithm for finding a replacement path for a $k$-far edge $e$}
\label{alg:que}
\end{algorithm}

 Let $P$ be the replacement path from $s$ to $t$ avoiding a $k$-far edge $e$. Using
Lemma \ref{lem:rexists}, we know that there exists a landmark vertex $r \in
\SUF(P)$ ($r \in \RR_k$) such that the distance of $r$ to $t$ on $\SUF(P)
\le 2^{k}\NBS \log n$. Thus, the shortest path from $r$ to $t$, that
is $rt$, has length $\le 2^{k }\NBS \log n$. We first claim
that this path cannot pass through $e$. This is due to the
fact that $e$ is a $k$-far edge and the shortest path from $e$
to $t$ is $ \ge 2^{k+1} \NBS \log n$. Given
such a $r \in \RR_{k}$, finding the replacement path becomes
easy, it is  $d(s,r,e)+d(r,t)$. We have already
calculated both these terms in the preprocessing phase.

However, our algorithm does not know this particular $r$
before-hand. So, it tries all the vertices in $\RR_{k}$ and
finds the required $r$.
The running time of the above algorithm for a fixed $t$
and a $k$-far edge  is $\TL(\frac{\NS}{2^k})$.
Since there can be at most  $2^k \NBS \log n$ $k$-far edges on $st$ path,
the total time taken to find the replacement path for all $k$-far edges for a fixed $t$ is $\TL (n)$. Since $k \le \log \NS$, the total time taken to find replacement path for all far edges in $st$ path is $\TL( n)$.  Thus, we can
find replacement path for each far edge in $st$ path for each $s \in \SSS$ and $t \in V$ in $\TL(\SI n^2)$ time.

  % !TEX root = paper.tex

  \section{Near Edges}
  \label{sec:near}

  There can be two types of replacement path that avoids a near edge $e$ on a $st$ path where $s \in \SSS$ and $t \in V$.
  \begin{enumerate}
  \item (Small replacement path) $|st \diamond e| \le |se| +  2\NBS \log n $
  \item  (Large replacement path) $|st \diamond e| > |se| +2   \NBS \log n$
  \end{enumerate}
  We say that first set have small replacement paths avoiding a near edge, while the second set of paths have large replacement paths avoiding a near edge.

  \subsection{Small Replacement Paths avoiding a near edge}
  \label{sec:smallsuffix}
  In this section, we will  find all  small replacement
  paths from $s$ to $t$  that avoid a near edge.
  To this end, we will make an auxiliary graph $G_s$. This graph will encode the shortest path from $s$ to other vertices $t \in V$ avoiding near edges on $st$ path. After making this graph, we will run Dijkstra's algorithm on it. At the end of this section, we will show that the output of  Dijkstra's algorithm will give us all small replacement paths.

  {\bf Construction of the auxiliary graph}: The graph  $G_s$ contains a single source node $[s]$. For each $t \in V$, there is a node $[t]$ in $G_s$. For each near edge $e \in st$ path, there is a node $[t,e]$ in $G_s$.
  We will now add edges in this graph. There is an edge from $[s]$ to $[v]$ with weight $|sv|$ for each $v \in V$.
  There is an edge from $[v]$ to $[t,e]$ of weight $1$ if $e$ does not lie in $sv$ path
  and $v$ is a neighbor of $t$.
  For each $[v,e]$, there is an edge from $[v,e]$ to $[t,e]$ with weight $1$ if $t$ is a neighbor of $v$.

  {\bf Size of the auxiliary graph}: Let us first find the number of vertices in $G_s$. For each $t \in V$, there is a node $[t]$ and $[t,e]$ where $e$ is a near edge on $st$ path. Thus, for each $t \in V$, we add $\TL (\NBS)$ vertices in $G_{s}$. Thus, the total number of vertices in $G_s$ is $\TL(n \NBS)$. Let us now calculate the number of edges in $G_{s}$. There may be an edge from $[s]$ to every other node in $G_s$. For each $v \in V$, there may be an edge from $[v]$ to $[t,e]$ where $t$ is a neighbor of $v$. But there are only $\TL (\NBS)$ vertices of type $[t,\cdot]$. This implies that the total number of edges of $[v]$ is $\TL(\deg(v) \NBS)$. Similarly, there are at most $\TL(\deg(v))$ edges out of node $[v,e]$. This implies that the total number of edges in $G_s$ is $\TL(n \NBS +     \sum_{v \in V} \deg(v) \NBS + \sum_{[v,e] \in G_s} \deg(v)) = \TL(m \NBS +n \NBS )$.

  {\bf Time taken to construct the auxiliary graph}: Let us now try to find the time taken to construct the graph $G_s$. We can find all near edges on $st$
  path in $\TL(\NBS)$ time using $\TT_s$. Thus, creating the nodes in the graph takes $\TL(n \NBS)$ time. Let us
  now find the time taken to add an edge in the graph. For each $[v]$, we need to add an edge
  from $[v]$ to $[t,e]$ if $e \notin sv$ path and $t$ is a neighbor of $v$. We can check if $e$ lies in $sv$ path
  by using $\LCA$ query in $\TT_s$. Thus, adding the edge takes $O(1)$ time. Similarly adding
  an edge out of $[v,e]$ also takes $O(1)$ time. Thus, the time taken to make $G_s$ is proportional
  to the number of vertices and edges in $G_s$.

  {\bf Time to run Dijkstra's algorithm in the auxiliary graph}: We now run Dijkstra's algorithm in $G_s$. Let $w[t,e]$ be the weight of the path
  from $[s]$ to $[t,e]$ returned by Dijkstra's algorithm. We then set $d(s,t,e) \leftarrow \min\{d(s,t,e), w[t,e]\}$. The time taken to run Dijkstra's algorithm in $G_s$ is $\TL(m\NBS + n\NBS)$. Thus, the total time taken to
  construct all $\SI$ auxiliary graphs and run Dijkstra's algorithm in them is $\TL\ (m \NS\ +\ \SI n^2)$ time.

  {\bf Proof of Correctness}: We are now ready to prove the correctness of our algorithm.
  To this end, we show the following:

  \begin{lemma}
  \label{lem:alwaysset}
  Fix a $t \in V$ and $s \in \SSS$. Let $P$ be a replacement path avoiding a near edge  $e$  on $st $ path.
  If $ |P| \le|se| + 2 \NBS \log n$,
  Then our algorithm sets  $d(s,t,e)$ to $|P|$.
  \end{lemma}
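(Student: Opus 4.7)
The plan is to establish $d(s,t,e)=|P|$ by proving both inequalities. For the upper bound $d(s,t,e)\le |P|$, I will exhibit a path in the auxiliary graph $G_s$ from $[s]$ to $[t,e]$ of weight at most $|P|$; since Dijkstra finds the shortest such path, the resulting $w[t,e]$ is at most $|P|$, and the $\min$-update yields $d(s,t,e)\le|P|$. For the reverse inequality, I will observe that every $[s]$-to-$[t,e]$ path in $G_s$ unpacks into a walk in $G$ from $s$ to $t$ avoiding $e$ (its shortest $sv$-prefix avoids $e$ by the edge condition in $G_s$, and the subsequent unit-weight hops concatenate single edges into a walk ending at $t$), so its weight is at least $|st\diamond e|=|P|$.

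For the upper-bound construction, let $u$ be the first vertex at which $P$ leaves the $st$-path and enumerate the suffix as consecutive vertices $\SUF(P)=u=x_0,x_1,\ldots,x_\ell=t$, so that $|P|=|su|+\ell$. The natural candidate is
\[
\pi:\ [s]\to[u]\to[x_1,e]\to[x_2,e]\to\cdots\to[x_{\ell-1},e]\to[t,e],
\]
whose total weight is $|su|+\ell=|P|$. The edge $[s]\to[u]$ exists with weight $|su|$ by construction; the transition $[u]\to[x_1,e]$ is valid because $x_1$ is adjacent to $u$ in $P$ and the $su$-prefix of the $st$-path lies strictly before $e$, so $e\notin su$ path; each subsequent hop $[x_i,e]\to[x_{i+1},e]$ requires adjacency (which holds along $P$) together with both endpoints being nodes of $G_s$.

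The main obstacle is verifying that each intermediate $[x_i,e]$ is indeed a node of $G_s$, i.e.\ that $e$ is a near edge on the shortest $sx_i$-path. I will use the chain of bounds $|sx_i|\le|su|+i\le|P|\le|se|+2\NBS\log n$ (the first inequality comes from the walk along $P$) to argue that, whenever $e$ does lie on the shortest $sx_i$-path, the distance from the farther endpoint of $e$ to $x_i$ along that path equals $|sx_i|-|se|-1<2\NBS\log n$, so $e$ is near on $sx_i$ path and $[x_i,e]$ is a valid node. However, $e$ may fail to lie on the shortest $sx_i$-path for some intermediate $i$, in which case $[x_i,e]$ does not exist and $\pi$ breaks. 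I will handle this by rerouting: let $j^{\star}$ be the largest index in $\{0,1,\ldots,\ell-1\}$ with $e$ not on the shortest $sx_{j^{\star}}$-path. Such $j^{\star}$ exists since $j^{\star}=0$ works ($e\notin su$ path because $u$ lies before $e$ on $st$), and by its maximality $e$ lies on the shortest $sx_j$-path for every $j^{\star}<j\le\ell$, so $[x_j,e]$ is a valid node for all such $j$ by the bound above. The rerouted candidate
\[
[s]\to[x_{j^{\star}}]\to[x_{j^{\star}+1},e]\to[x_{j^{\star}+2},e]\to\cdots\to[t,e]
\]
has weight $|sx_{j^{\star}}|+(\ell-j^{\star})\le(|su|+j^{\star})+(\ell-j^{\star})=|P|$, where the first inequality again uses the walk along $P$. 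Combining this upper bound with the walk-correspondence lower bound gives $d(s,t,e)=|P|$, as claimed.
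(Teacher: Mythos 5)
Your proof is correct and takes a genuinely different route from the paper's. The paper argues by induction on the number of edges of $P$: it peels off the last edge $(v,t)$, shows by case analysis that either $e\notin sv$ or $e$ is a near edge on the $sv$ path with $P\setminus(v,t)$ a small replacement path (the other two cases contradict $|P|\le|se|+2\NBS\log n$), and then invokes the inductive hypothesis on $P\setminus(v,t)$. You instead exhibit the witnessing $[s]$-to-$[t,e]$ path in $G_s$ in one shot, by locating the last vertex $x_{j^\star}$ of $\SUF(P)$ whose shortest path from $s$ avoids $e$ and routing $[s]\to[x_{j^\star}]\to[x_{j^\star+1},e]\to\cdots\to[t,e]$; your chain $|sx_j|\le|su|+j\le|se|+2\NBS\log n$, which certifies that each intermediate $[x_j,e]$ is a legitimate node of $G_s$ (i.e., that $e$ is near on the $sx_j$ path whenever it lies on it), is precisely the content of the paper's cases (2) and (3), obtained without induction. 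Your writeup is also more complete in one respect: you prove the converse inequality $w[t,e]\ge|st\diamond e|$ by unpacking any $G_s$-path into a walk in $G$, a direction the paper's proof omits entirely. One caveat on that direction: as the construction of $G_s$ is literally written, a unit hop $[v]\to[t,e]$ or $[v,e]\to[t,e]$ is permitted even when $(v,t)=e$ itself, so the unpacked walk is not automatically $e$-avoiding; your parenthetical only justifies that the $sv$-prefix avoids $e$. The intended construction clearly forbids the hop $(v,t)=e$, and with that (trivial) amendment your lower-bound argument is sound.
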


  \begin{proof}
  \iflong
  We will prove the statement using induction on edge length
  of the replacement path. That is, we will prove the statement
  for all the replacement paths of edge length 0, then edge length 1
  and so on.
  The base case is trivial, there is only one replacement
  path of edge length 0, that is $d(s,s,\emptyset) = 0$. Using the induction hypothesis, let us assume that the statement is
  true for all replacement paths of edge length $i-1$. Let us assume
  that $P$ contains $i$ edges and satisfies the condition of the lemma.
   And the last edge on this path
  is $(v,t)$.
  There are three cases:

  \begin{enumerate}
  \item $e$ does not lie on $sv$ path.

  If $e$ does not lie on $sv$ path, then $d(s,v,e) = d(s,v)$.
  But even in the auxiliary graph $G_s$, there is a path from $[s] \rightarrow [v] \rightarrow [t,e]$. The weight of this path is $|sv| + 1$. Thus, even our algorithm will
  set $d(s,t,e) = |sv|\ + 1$.

  \item $e$ lies on $sv$ path and $e$ is a far edge on $sv$ path.

  We will show that this case cannot arise. If $e$ is a far edge on $sv$ path, then $|sv|= |se| + |ev| > |se| +\ 2 \NBS \log n$. Thus the replacement path from $s$ to $v$ avoiding $e$, that is $P \setminus (v,t)$, has weight   $>|se| + 2 \NBS\ \log n$. This implies $|P| > |se| + 2 \NBS \log n$. This contradicts the statement of the lemma, namely $|P| \le |se| +\ 2 \NBS \log n$.

  \item $e$ lies on $sv$ path, $e$ is a near edge on $sv$ path and $P \setminus (v,t)$ is a
  large replacement path.

  Even in this case, the weight of $P \setminus (v,t)$ is $ > |se| + 2 \NBS \log n$.
  So similar to above, this case cannot arise.

  \item $e$ lies on $sv$ path, $e$ is a near edge on $sv$ path and $P \setminus (v,t)$ is
  a small replacement path.

  The replacement path from $s$ to $v$ avoiding $e$ is $P \setminus (v,t)$. This path has $i-1$ edges. Using induction hypothesis, we have set $d(s,v,e)$ correctly. Thus, Dijkstra's algorithm will set $d(s,t,e) = d(s,v,e) +|vt|$.

  \end{enumerate}
  \else
  See the Appendix \ref{app:0} for proof.
  \fi
  \end{proof}

  \subsection{Large Replacement Paths avoiding a near edge}
  Let $P$ be a replacement path from $s$ to $t$ avoiding $e$
  such that $|st \diamond e| > |se| +2   \NBS \log n$. We will first prove a simple observation:

  \begin{lemma}
  \label{lem:longsuffix}
  Let $P$ be a replacement path from $s$ to $t$ avoiding a near edge $e$
  such that $|P| > |se| +2   \NBS \log n$. Then $|\SUF(P)| > 2 \NBS \log n$.

  \end{lemma}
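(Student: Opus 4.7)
The plan is to split $P$ at the point where it leaves the $st$ path and then invoke the hypothesis directly. Let $v$ be the vertex on the $st$ path at which $P$ diverges, i.e.\ the last vertex that $P$ shares with the prefix of the $st$ path before continuing along $\SUF(P)$. By the definition of $\SUF(P)$ we have the disjoint decomposition
\[
P \;=\; st[s,v] \;+\; \SUF(P),
\]
and since $st[s,v]$ is a sub-path of the shortest $st$ path, its length equals $|sv|$. Therefore
\[
|P| \;=\; |sv| \;+\; |\SUF(P)|.
\]

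The key observation is that $|sv| \le |se|$. Indeed, $P$ avoids $e$, so the divergence point $v$ cannot lie strictly on the $t$-side of $e$ along $st$; otherwise the prefix $st[s,v]$ of $P$ would already traverse $e$. Hence $v$ is encountered no later than the $s$-endpoint of $e$ when walking from $s$ along $st$, which is exactly the inequality $|sv| \le |se|$.

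Combining these two facts with the hypothesis $|P| > |se| + 2\NBS \log n$ yields
\[
|\SUF(P)| \;=\; |P| - |sv| \;\ge\; |P| - |se| \;>\; 2\NBS \log n,
\]
which is the claim. There is no real obstacle here: the lemma is essentially a restatement of the hypothesis once one rewrites $|P|$ as $|sv| + |\SUF(P)|$ and notes that the divergence point must occur before $e$ on the $st$ path.
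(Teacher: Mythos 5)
Your proof is correct and follows essentially the same route as the paper's: identify the divergence vertex (the paper calls it $z$, you call it $v$), decompose $|P| = |sv| + |\SUF(P)|$, observe $|sv| \le |se|$ since the divergence must occur before $e$, and substitute into the hypothesis. There is no substantive difference.
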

  \begin{proof}

  Remember that the suffix of $P$ will start from a vertex before $e$ on $st$ path. Let this vertex be $z$. Then $|sz| \le |se|$. Also, $|P| = |sz| + |\SUF(P)|$. But $|P|\ >\ |se| +\ 2 \NBS \log n$. This implies that $|sz| + |\SUF(P)| > |se| +\ 2 \NBS \log n$. Since $|sz | \le |se|$, it follows that $|\SUF(P)| > 2 \NBS \log n$.

  \end{proof}

  Since  $\SUF(P) > 2 \NBS \log n$,  with a high probability, there exists a
  landmark vertex $r \in \RR_0$ such that the distance of $r$
  to $t$ on $\SUF(P)$ is $\le \NBS \log n$.
  The  proof for this is similar to Lemma \ref{lem:rexists}.
  We state this lemma without proof.

  \begin{lemma}
    \label{lem:rexists1}
    Let $\PP$ be the set of all large replacement paths from $s \in \SSS$ to $t \in V$ that avoid a near edge. Given any path $P \in
    \PP$ such that $P$ avoids a near edge $e$ on $st$ path, with a high probability there exists a vertex
     $r \in \RR_{0}$ on $\SUF(P)$ such that the distance of $r$
    to $t$ on $\SUF(P)$ is $\le \NBS \log n$.
  \end{lemma}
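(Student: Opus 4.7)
The plan is to mirror the proof of Lemma \ref{lem:rexists} almost verbatim, with the single substitution $k=0$ and with the length lower bound on $\SUF(P)$ now supplied by Lemma \ref{lem:longsuffix} instead of Observation \ref{obs:sufflong}.

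First I would fix an arbitrary $P \in \PP$ and invoke Lemma \ref{lem:longsuffix} to conclude $|\SUF(P)| > 2\NBS \log n$. Let $Y$ denote the set of the $\NBS \log n$ vertices on $\SUF(P)$ closest to $t$ along $\SUF(P)$; such a set exists because $\SUF(P)$ is long enough. By the definition of $\RR_0$ (Definition \ref{def:landmark} with $k=0$), every vertex of $G$ is placed in $\RR_0$ independently with probability $\frac{4}{2^0}\sqrt{\SI/n} = 4/\NBS$. Let $X_P$ be the bad event that $Y \cap \RR_0 = \emptyset$. Then
\[
\Pr[X_P] \;=\; \left(1 - \frac{4}{\NBS}\right)^{\NBS \log n} \;\le\; e^{-4 \log n} \;=\; \frac{1}{n^4}.
\]

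Next I would take a union bound over all large replacement paths in $\PP$. Since a replacement path is determined by a source $s \in \SSS$, a target $t \in V$, and an avoided edge on $st$ (of which there are at most $n$), we have $|\PP| \le \SI n^2 \le n^3$. Hence
\[
\Pr\!\left[\bigcup_{P \in \PP} X_P\right] \;\le\; n^3 \cdot n^{-4} \;=\; \frac{1}{n},
\]
which says that with high probability every $P \in \PP$ contains a vertex $r \in \RR_0$ among its last $\NBS \log n$ vertices, and in particular the distance from $r$ to $t$ along $\SUF(P)$ is at most $\NBS \log n$, as required.

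There is no substantive obstacle here: the argument is purely a Chernoff/union-bound computation identical in spirit to Lemma \ref{lem:rexists}. The only point that deserves a moment of care is the bound $|\PP| \le n^3$, which uses that each path is indexed by a (source, target, avoided edge) triple; this is why the bound $\Pr[X_P] \le n^{-4}$ is enough to survive the union bound with room to spare.
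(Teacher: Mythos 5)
Your proof is correct and follows exactly the route the paper intends: the paper explicitly omits the proof of Lemma~\ref{lem:rexists1}, stating that it is "similar to Lemma~\ref{lem:rexists}," and your argument is precisely the $k=0$ instantiation of that proof, with the suffix-length bound supplied by Lemma~\ref{lem:longsuffix} rather than Observation~\ref{obs:sufflong}. The sampling-probability computation, the $(1-4/\NBS)^{\NBS\log n}\le n^{-4}$ estimate, and the union bound over $|\PP|\le\SI n^2\le n^3$ all match the paper's template.
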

  Now, we will  find the $r$ stated in the above lemma. Once we find
  this $r$, we can calculate $d(s,t,e)$ as follows: $d(s,t,e)
  = d(s,r,e) +\ d(r,t,e)$.
  We would have liked to write $d(r,t)$ instead of $d(r,t,e)$
  as we have not calculated $d(r,t,e)$ beforehand.
  %But we are not sure if the shortest path from $r$ to $t$
  %passes through $e$ or not.
  In the following lemma, we will show that
  $e \notin rt$, implying that $d(r,t,e) = d(r,t)$.

  \begin{lemma}
  \label{lem:longP}
  Let $P$ be the shortest replacement path from $s$ to $t$ avoiding a near edge $e$ on $st$ path such that $|P|  > |se| +\ 2 \NBS \log n$. Then  there exists a landmark vertex $r$ on $\SUF(P)$ such that $st \diamond e = sr\diamond e  + rt$ and $e \notin rt$.
  \end{lemma}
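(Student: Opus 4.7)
The plan is to extract the landmark $r$ guaranteed by Lemma~\ref{lem:rexists1} and then verify both conclusions of the present lemma for it. By Lemma~\ref{lem:longsuffix} we have $|\SUF(P)| > 2\NBS\log n$, so Lemma~\ref{lem:rexists1} supplies, with high probability, a landmark $r \in \RR_0$ on $\SUF(P)$ with $|\SUF(P)[r,t]| \le \NBS\log n$; in particular $|rt| \le \NBS\log n$. This $r$ is the candidate witnessing the lemma.

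For the additive equality $st\diamond e = sr\diamond e + rt$, I would exploit that $P = st\diamond e$ is itself a shortest $s$-to-$t$ path in $G\setminus\{e\}$, so every sub-path of $P$ is also a shortest path in $G\setminus\{e\}$ between its endpoints. Applied to the split at $r$ this gives $|P[s,r]| = |sr\diamond e|$ and $|P[r,t]| = |rt\diamond e|$, hence $|P| = |sr\diamond e| + |rt\diamond e|$. Once $e \notin rt$ has been established, also $|rt| = |rt\diamond e|$, so $|st\diamond e| = |sr\diamond e| + |rt|$ and the concatenation $sr\diamond e + rt$ is an $e$-avoiding $s$-to-$t$ walk of exactly this length, realising $st\diamond e$.

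The main obstacle is establishing $e \notin rt$. Writing $e = (u,v)$ with $u$ closer to $s$ on $st$, suppose toward contradiction that the shortest $rt$-path in $G$ uses $e$. Since $st$ is a shortest $st$-path, the triangle inequality pins down $|ut| = 1 + |et|$, and any $rt$-path through $e$ therefore has length at least $1 + |et|$, giving $|rt| \ge 1 + |et|$. Combined with $|rt| \le \NBS\log n$ this forces $|et| < \NBS\log n$. When $|et| > \NBS\log n$ the inequality is already contradictory (mirroring the far-edges argument of Section~\ref{sec:far}). When $|et| \le \NBS\log n$ the clean distance gap disappears, so the choice of $r$ has to be refined around the vertex $c$ at which $\SUF(P)$ rejoins $st$: any landmark on the $ct$-sub-path of $\SUF(P)$ has its $rt$-path equal to a sub-path of $st$ and therefore avoids $e$ automatically, and a careful sampling argument driven by the large-replacement-path condition $|P| > |se| + 2\NBS\log n$ (together with $|\SUF(P)| > 2\NBS\log n$) is what produces such a ``safe'' landmark with high probability. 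I expect the technical heart of the proof to live precisely in this near-the-target sub-case, since the obvious distance-gap contradiction does not apply when $e$ sits very close to $t$ on the $st$-path.
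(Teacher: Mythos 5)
Your setup (applying Lemmas~\ref{lem:longsuffix} and~\ref{lem:rexists1} to obtain $r\in\RR_0$ with $|rt\diamond e|\le\NBS\log n$, and splitting $P$ at $r$ to get the additive equality) is fine, but your handling of $e\notin rt$ has a genuine gap, and you correctly sense where it is. Your inequality $|rt|\ge 1+|et|$, combined with $|rt|\le\NBS\log n$, only rules out the case $|et|>\NBS\log n$; for a near edge with $|et|\le\NBS\log n$ you have no contradiction, and the fallback you sketch (force the landmark onto the $ct$ portion of $\SUF(P)$ after it rejoins $st$) is not supported by Lemma~\ref{lem:rexists1} and would fail precisely when $|ct|$ is small -- exactly the regime you need it for.

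The paper closes this case without any case split on $|et|$, by constructing an explicit detour rather than comparing $|rt|$ against $|et|$. Suppose $e=(u,v)\in rt$ with $u$ closer to $s$. Since $u$ lies on the shortest $rt$ path and $|rt|\le|rt\diamond e|\le\NBS\log n$, we get $|ru|\le\NBS\log n$. Now take $P' = su + ur + (rt\diamond e)$: the prefix $su$ of $st$ avoids $e$, the $ur$ segment avoids $e$ (it is the prefix of the shortest $rt$ path before $e$), and $rt\diamond e$ avoids $e$ by definition. Hence $P'$ is a valid $e$-avoiding $s$--$t$ path with $|P'|\le|se|+\NBS\log n+\NBS\log n=|se|+2\NBS\log n$, contradicting the hypothesis $|P|>|se|+2\NBS\log n$. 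The missing idea in your proposal is this short ``jump'' $u\to r$ of length at most $\NBS\log n$, which makes the bound on $|P'|$ independent of $|et|$ and eliminates the near-$t$ subcase you were worried about.
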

  \begin{proof}
  By Lemma \ref{lem:longsuffix}, $|\SUF(P)| > 2 \NBS\ \log n$. Using Lemma \ref{lem:rexists1},
  there exists a
  landmark vertex $r \in \RR_0$ such that the distance of $r$
  to $t$ on $\SUF(P)$ is $\le \NBS \log n$.
  Thus, $st \diamond e = sr \diamond e + rt \diamond e$.
  Also we claim that $|rt \diamond e| \le \NBS \log n$, since the distance from $r$ to $t$ on
  $\SUF(P)$ is $\le \NBS \log n$. We will now show that $rt \diamond e = rt$, that is $e$ does not lie on $rt$ path.
  \begin{figure}[hpt!]
  \begin{center}
      \begin{tikzpicture}
      \draw [thick](0,0) -- (8, 0);
      \draw (0,0) node{$\bullet$};
      \draw (0,0) node[above]{$s$};
      \draw (8,0) node{$\bullet$};
      \draw (8,0) node[above]{$t$};
      %\draw (4.5,0) node{$\bullet$};
      %\draw (4.5,0) node[above]{$t_s$};
      \draw (5.5,0) node {\Cross} node[above]{$e$};
      \draw (5,0) node{$\bullet$};
      \draw (5,0) node[above]{$u$};
      \draw (6,0) node{$\bullet$};
      \draw (6,0) node[above]{$v$};
      \draw [dashed](2.5,0) ..controls (5, -1.5).. (7, -1.5);
      \draw [thick,dashed](5,0) ..controls (5, -1).. (7, -1.5);
      \draw (7,-1.5) node{$\bullet$};
      \draw (7,-1.5) node[below]{$r$};
          \draw [dashed](7,-1.5) ..controls (7.5, -1.2)..
  (8,0);
          \draw(8.55, -1) node{$\leq  \NBS  \log n$};
          \draw(5.55, -1.5) node[below]{};
      \end{tikzpicture}\\
  \end{center}
  \caption{Alternate path avoiding $e$, $|P'| = su + ur + rt \diamond e$}
  \end{figure}
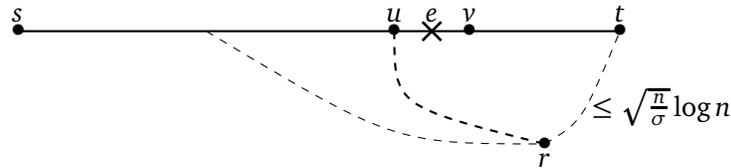
  Assume for contradiction that  $e$ lies on the $rt$ path.  Let $e=(u,v)$. Since $e$ lies on both $st$
   and $rt$ path, there exists a path $su$ and $ru$.  Since
  $rt \diamond e$ path itself is of length $ \le \NBS \log n$, $|ru|
  \le \NBS \log n$. Consider the following replacement path avoiding  $e$, $P' = su\ + ur + rt \diamond e$. Thus, $|P'| \le  |se| +2 \NBS \log n$.
  %Also $|su| \le |st|$. Thus, $|st \diamond
  %e|
  %\le |su| + |ur| + |rt \diamond e| \le |st| + 2\NBS \log
  %n$.
  This contradicts the condition of the lemma, which says that the shortest replacement path
  avoiding $e$, that is $|P| > |se| +
   2\NBS \log n$.

  \end{proof}

  Thus, we need to process only those $r \in \RR$ such that $e \notin rt$. Our algorithm, thus, is very simple.

  \begin{algorithm}[H]
  \caption{Algorithm for finding large replacement paths avoiding a near edge}

  \ForEach{$r \in \RR_0$}
  {

     \If{ $e$ does not lie on $rt$ path}
     {
         $d(s,t,e) \leftarrow \min\{ d(s,r,e) + d(r,t), d(s,t,e)\}$\;
     }
  }
  \end{algorithm}

  The \textsc{If} condition in the above algorithm can
  be checked in $O(1)$ time using $\LCA$ queries in $\TT_r$. Also, $d(s,r,e)$ and $d(r,t)$
  are calculated during pre-processing and can be queried
  in $O(1)$ time using the hash-table. Thus, the running time
  of the above algorithm for a fixed $e$ and $t$ is $\tilde
  O(\NS)$. Since there are at most $\tilde O(\NBS)$
  near edges in $st$ path, finding all replacement paths takes $\TL(n)$ time. Thus, finding
  all large replacement paths for all vertices and for all sources takes $\tilde O(\SI n^2)$ time.

  For the $\SSR$ problem, our algorithm is now complete.
  We can use \cite{MalikMG89,Hershberger2001,NardelliPW03} to find all replacement paths
  from $s$ to vertices in $\RR$ in $\TL(m \sqrt n)$ time.
  The reader can check that the total running time taken by our algorithm in Section  \ref{sec:far} and \ref{sec:near}
  is $\TL(m \NS + \SI n^2)$ which is $\TL(m \sqrt n + n^2)$ when $\SI = 1$. Thus, we claim the following
  theorem:

  \begin{theorem}
  There is a randomized algorithm that solves the $\SSR$ problem in $\TL(m \sqrt n + n^2)$ time.
  \end{theorem}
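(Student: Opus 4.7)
The plan is to assemble the algorithm already developed in Sections~\ref{sec:far} and~\ref{sec:near}, specializing to $\SI=1$, and account for the preprocessing cost. First I would instantiate the landmark sets $\RR_k$ with $\SI=1$, so that $\NS = \sqrt n$ and $|\RR| = \TL(\sqrt n)$ by Lemma~\ref{lem:sizeR}. The preprocessing then consists of two parts: (i) running \BFS{} from $s$ and from every landmark vertex to build the trees $\TT_r$ and the hash-tables storing $d(r,v)$, which costs $\TL((m+n)\sqrt n) = \TL(m\sqrt n)$, and (ii) computing the replacement paths from $s$ to every $r\in\RR$. For a single source we do not need the Bernstein--Karger adaptation of Section~\ref{sec:general}; instead, for each landmark $r$ we invoke the classical $\TL(m+n)$-time algorithm of~\cite{MalikMG89,Hershberger2001,NardelliPW03} on the pair $(s,r)$ to obtain $d(s,r,e)$ for every $e\in sr$. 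Summed over $\TL(\sqrt n)$ landmarks this is again $\TL(m\sqrt n)$, and the outputs are exactly the hash-table entries $d(s,r,e)$ consumed by Algorithm~\ref{alg:que} and by the large-near-edge procedure.

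With preprocessing in hand, the remaining cost follows directly from the per-source bounds already established. By Section~\ref{sec:far}, Algorithm~\ref{alg:que} processes all $k$-far edges for a fixed $t$ in $\TL(n)$ time and hence $\TL(n^2)$ time summed over $t\in V$. For small near-edge replacement paths, the auxiliary graph $G_s$ of Section~\ref{sec:smallsuffix} has $\TL(n\NBS) = \TL(n^{3/2})$ vertices and $\TL(m\NBS+n\NBS) = \TL(m\sqrt n + n^{3/2})$ edges, so building it and running Dijkstra costs $\TL(m\sqrt n + n^2)$. Finally, the large-near-edge subroutine iterates over $\RR_0$ for each of the $\TL(\NBS)$ near edges on each $st$ path, running in $\TL(\NS) = \TL(\sqrt n)$ per edge and $\TL(n)$ per destination, for $\TL(n^2)$ overall.

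Correctness is inherited verbatim from the lemmas already proved: every shortest replacement path $P$ from $s$ to $t$ avoiding $e$ is either a small near-edge path (handled exactly by the auxiliary-graph Dijkstra via Lemma~\ref{lem:alwaysset}), a large near-edge path (whose suffix contains, with high probability, a landmark $r\in\RR_0$ with $e\notin rt$, by Lemma~\ref{lem:longP} combined with Lemma~\ref{lem:rexists1}), or a far-edge path (whose suffix contains, with high probability, a landmark $r\in\RR_k$ whose $rt$ shortest path has length at most $2^k\NBS\log n$ and therefore cannot traverse the $k$-far edge $e$, by Lemma~\ref{lem:rexists}). A union bound over the $O(n^2)$ pairs $(t,e)$ preserves high probability. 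Summing the contributions and absorbing polylog factors yields a total running time of $\TL(m\sqrt n + n^2)$, as claimed.

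The only nontrivial thing to check — and the main obstacle — is that dropping the Bernstein--Karger preprocessing in favor of $\TL(\sqrt n)$ independent invocations of the single-pair classical algorithm still produces every entry $d(s,r,e)$ that the downstream procedures query, and does so within the target budget. This holds because each classical invocation returns $|sr\diamond e|$ for every $e\in sr$ in $\TL(m+n)$ total time, which is exactly the information stored in the hash-table, and the $\sqrt n$ invocations sum to $\TL(m\sqrt n)$; no other term in the running-time accounting depends on $\SI$ in a way that fails to collapse at $\SI=1$.
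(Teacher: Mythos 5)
Your proposal is correct and follows the paper's own argument: the paper likewise replaces the Bernstein--Karger preprocessing with $\TL(\sqrt n)$ invocations of the classical single-pair replacement-path algorithm on the landmarks, then specializes the far-edge and near-edge running-time bounds of Sections~\ref{sec:far} and~\ref{sec:near} to $\SI=1$. Your accounting and correctness argument match the paper's (the paper states them more tersely), so there is nothing to add.
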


  In the rest of the paper, we will generalize the result to multiple sources.

  % !TEX root = paper.tex
	\iflong
	\else
	%\vspace{-5mm}
	\fi
  \section{Generalizing to Multiple Sources}
  \label{sec:general}
	\iflong
	\else
	%\vspace{-3mm}
	\fi

  When there is one source $s$, we can find the replacement path from $s$ to all vertices in $\RR$ using the algorithm
  of \cite{MalikMG89, Hershberger2001,NardelliPW03}. However, we cannot
  use this algorithm when there are multiple sources as it
  leads to the running time of $\TL((m+n)\SI\NS )$. In this
  section, we describe a procedure that will find all replacement
  paths from $s \in \SSS$ to $r \in \RR$ in $\TL(m\NS\ +\
  \SI n^2)$ time.
  Some lemmas, definitions, and results in this section
  can be seen as the   generalization of the result by Bernstein
  and Karger \cite{Bernstein2009}.

  We sample another set of vertices which we call as centers (to differentiate them from  landmark vertices).
  Let $\CC_k$ be the set of centers sampled with the probability $\frac{4}{2^k}\sqrt {\frac{\SI}{n}}$ where $0 \le k \le \log \NS$.
  Thus (similar to Lemma \ref{lem:sizeR}), with a high probability, $| \CC_k| = \TL\Big(\frac{\NS}{2^k}\Big)$.
  A center is said to have priority $k$ if it lies in $\CC_k$. Additionally, we add all
  vertices of $\SSS$ in $\CC_0$.
  Like landmark vertices (similar to Lemma \ref{lem:sizeR}), the total number of centers is $\TL( \NS )$.
  We run $\BFS$ algorithm from each center $c$ and find the shortest path tree $\TT_c$.
  This takes $\TL(m \NS)$ time.

  Fix a source $s \in \SSS$ and a vertex $r \in \RR$. We can go over the path from a source $s$ to $r$ to find a center with the highest priority. We then move from $s$ to this highest priority center, finding a list of  centers with priority in ascending order. Let $c_1$ be the first center  in the $sr$ path. Then $c_2$ be the next center with a higher priority than $c_1$. This continues till we reach the highest center on the path $sr$. Then, we find the list of centers in descending order of priority. There are at most $O(\log n)$ centers thus found. Since we are just walking on the path $sr$ in this procedure, the time taken is the size of the path which is $O(n)$. Since there are $\TL (\SI \NS)$ pairs of possible $s$ and $r$, the total time taken to find the list of centers  is $\TL ( n\SI \NS  ) = \TL(\SI n^2)$ time. These centers naturally form an interval in the $sr$ path, which we define next:

  \begin{definition}(Interval on a $sr$ path)
  Let $sr$ be a path such that $s \in \SSS$ and $r \in \RR$. Assume that we find the centers $c_1,c_2, \dots, c_{\ell}$ on this path, then we say that the path can be divided into intervals $sc_1, c_1c_2, \dots , c_{\ell}r$.

  \end{definition}

  Note that we have to find the replacement path from a
  source $s$ to a landmark vertex $r$ avoiding an edge $e$.
  To this end, we first find the pair of centers $c_1$ and
  $c_2$ between which $e$ lies in $sr$ path. The replacement
  path can be of the following three types:
  \begin{itemize}
  \item It passes through $c_1$.
  \item It passes through $c_2$.
  \item It avoids the interval $c_1c_2$.
  \end{itemize}

  The above observation is named path cover lemma in \cite{Bernstein2009}.

  \begin{lemma} (Path Cover Lemma)
  \label{lem:pathcover}
  Given a source $s$ and a landmark vertex $r$, for any edge
  $e$ on the sr path,
  let $c_1c_2$ be the centers between which $e$ lies in $sr$
  path. Then

  $sr \diamond e = \min\begin{cases}
  sc_1 + c_1r \diamond e,\\
  sc_2 \diamond e + c_2r,\\
  sr \diamond [c_1c_2]
  \end{cases}$

  where the last distance represents the shortest path from
  $s$ to $r$ avoiding  the interval $c_1c_2$.
  \end{lemma}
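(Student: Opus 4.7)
I would show the equality by two inequalities. For the easy direction $|sr \diamond e| \le \min\{\cdot\}$, it suffices to exhibit, for each of the three right-hand expressions, an $s$-to-$r$ walk avoiding $e$ of that length. For the first expression, take the prefix of the $sr$-path from $s$ to $c_1$: it has length $|sc_1|$ and cannot contain $e$ because $e$ lies strictly past $c_1$ on $sr$; concatenating it with any shortest $c_1r$-path that avoids $e$ produces the desired walk, so $|sr \diamond e| \le |sc_1| + |c_1 r \diamond e|$. The second expression is handled symmetrically using the suffix of $sr$ from $c_2$ to $r$. The third expression is immediate, since any path realizing $|sr \diamond [c_1 c_2]|$ avoids every edge of the interval $c_1 c_2$ and in particular the edge $e$.

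For the harder direction $|sr \diamond e| \ge \min\{\cdot\}$, let $P$ be a shortest $sr$-walk avoiding $e$, and perform a case analysis on how $P$ interacts with the two bracketing centers. If $P$ passes through $c_1$, split $P$ at $c_1$ into a prefix $P_1$ from $s$ to $c_1$ and a suffix $P_2$ from $c_1$ to $r$; then $|P_1| \ge |sc_1|$ and, since $P_2$ is a $c_1 r$-walk avoiding $e$, $|P_2| \ge |c_1 r \diamond e|$, so $|P|$ matches the first bound. Symmetrically, if $P$ passes through $c_2$, then $|P| \ge |sc_2 \diamond e| + |c_2 r|$, matching the second bound. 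Otherwise $P$ misses both $c_1$ and $c_2$, and I would then identify $P$ as a feasible candidate for the third expression, yielding $|P| \ge |sr \diamond [c_1 c_2]|$.

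The step I expect to require the most care is this last subcase, where one must justify the precise sense in which $P$ ``avoids the interval $c_1 c_2$'' once it merely misses the two bracketing centers. A short exchange argument handles this: if $P$ were to enter the interior of the $c_1 c_2$ subpath at some vertex $u$ and leave it at some vertex $v$, the corresponding subpath of $P$ between $u$ and $v$ can be replaced by the $sr$-subpath from $u$ to $v$ (which is itself a shortest $uv$-path, as any prefix/suffix of a shortest path is shortest), and this substitution either reintroduces $c_1$ or $c_2$ into the walk (reducing the problem to one of the first two cases) or leads to a walk of strictly smaller length, contradicting the minimality of $|P|$. Combining the three bounds via the minimum then yields $|sr \diamond e| \ge \min\{\cdot\}$ and, together with the easy direction, the claimed equality. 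This mirrors the structure of the original path-cover lemma of Bernstein and Karger, lifted here to the multi-source setting via the priority hierarchy on centers.
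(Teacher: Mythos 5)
Your overall structure is sound, and for what it is worth the paper itself gives no proof of this lemma at all (it imports it as an observation from Bernstein and Karger), so you are supplying an argument where the paper supplies none. The easy direction and the first two cases of the hard direction are correct. But the step you yourself flag as the delicate one --- the case where the shortest replacement path $P$ misses both $c_1$ and $c_2$ --- is handled by an exchange argument that does not work. If $P$ enters the interior of the interval at $u$ and leaves it at $v$, replacing $P[u,v]$ by the $sr$-subpath from $u$ to $v$ fails on three counts: (i) if $u$ and $v$ lie on opposite sides of $e$ within the interval, the substituted subpath \emph{contains} $e$, so the resulting walk is not a valid replacement path; (ii) since $u$ and $v$ are interior to the interval, the substituted subpath contains neither $c_1$ nor $c_2$, so the substitution cannot ``reintroduce'' a center and reduce to the first two cases; and (iii) the substitution only guarantees $|uv| \le |P[u,v]|$, not a strict decrease, so no contradiction with minimality is obtained.

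The correct repair reroutes an \emph{end} of $P$ through a center rather than splicing out a middle piece. Suppose $P$ contains some vertex $u$ of the (closed) interval. If $u$ lies weakly between $c_1$ and $e$, replace the prefix $P[s,u]$ by the prefix $sr[s,u]$ of the original shortest path: this prefix has length $|su| \le |P[s,u]|$, avoids $e$ (it ends before $e$), and passes through $c_1$. The resulting walk is a replacement path through $c_1$ of length at most $|P|$, whence $|P| \ge |sc_1| + |c_1 r \diamond e|$. Symmetrically, if $u$ lies weakly between $e$ and $c_2$, rerouting the suffix through $c_2$ gives $|P| \ge |sc_2 \diamond e| + |c_2 r|$. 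Only when $P$ contains \emph{no} vertex of the interval does it genuinely avoid the interval, giving $|P| \ge |sr \diamond [c_1 c_2]|$. With this substitution your case analysis closes and the lemma follows.
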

   The non-trivial part of the first and the second term in
  $sr \diamond e$ is $c_1r \diamond e$ and $sc_2 \diamond e$. In the first term, we want to find a replacement path from a center to a landmark vertex and in the second term we want to find a replacement path from a source to a center.
  %However, remember that all the sources are also included in $\RR$. Thus, we have to find the replacement path from each center to each landmark vertex.

  As in \cite{Bernstein2009}, we club together first two terms of the path cover lemma.
  % can be clubbed together as we will design a single
  %algorithm to find these paths.

  \begin{definition} (\MTC, Minimum through centers)
  Given any source $s$ and a vertex $r \in \RR$. For any edge
  $e$ on the sr path,
  let $c_1c_2$ be the centers between which $e$ lies in $sr$
  path. Then

  $\MTC(s,r,e) = \min\begin{cases}
  sc_1 + c_1r \diamond e,\\
  sc_2 \diamond e + c_2r
  \end{cases}$

  \end{definition}
  Thus, $\MTC$, minimum through centers, defines first two terms in the path cover lemma.
  We first calculate the $\MTC$ term. To calculate the $\MTC$ term, we have to find a replacement path from a center to a landmark vertex and  find a replacement path from a source to a center.

  Let us do the second part first.
 
  %However, we need not find replacement paths for all edges in the $cr$ path.
  
  \subsection{Finding the replacement path from a source to a center}
	
  \label{sec:centertosource}
  The second term in the $\MTC$ mandates us to find the replacement path from each source $s $ to each
  center $c$. However, we need not find this replacement path for each edge on $sc$ path. We need
  find the replacement path only for those edges that lie in the interval ending at $c$ on the $sc$
  path. To formalize this, let us show the following lemma:

  %The length of these interval is bounded depending on the centers at its boundary.
  %This can be described as below:

  \begin{lemma}
  \label{lem:priork}
  Let $xy$ be an interval in $sr$ path. Assume that the priority of $x$ is $k$ and the priority of $y$ is greater than priority of $x$.\ Then $|xy| = \TL\Big(2^{k} \sqrt{\frac{n}{\SI}}\Big)$.
  \end{lemma}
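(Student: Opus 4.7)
The plan is to exploit the greedy selection rule that produced the sequence $c_1,c_2,\ldots,c_\ell$ on the ascending portion of the priority list: since $y$ is by definition the next center after $x$ whose priority strictly exceeds $\text{priority}(x)=k$, no vertex of $G$ lying strictly between $x$ and $y$ on the $sr$ path can belong to any $\CC_j$ with $j\ge k+1$. If some such vertex did, it would itself be a center of priority at least $k+1$ situated before $y$, contradicting the choice of $y$ as the first center of higher priority encountered after $x$.

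Next I would convert this structural observation into a tail bound. Each vertex is placed in $\CC_{k+1}$ independently with probability $p = \frac{4}{2^{k+1}}\sqrt{\SI/n}$, so each vertex has priority at least $k+1$ with probability at least $p$. Denoting by $L$ the number of internal vertices on the path $xy$, the probability that none of them has priority $\ge k+1$ is at most $(1-p)^L \le e^{-pL}$. Choosing $L = c\cdot 2^k\sqrt{n/\SI}\log n$ for a sufficiently large constant $c$ forces this probability below $n^{-10}$ (say), from which $|xy| = \TL(2^k\sqrt{n/\SI})$ follows for the fixed interval.

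Finally I would apply a union bound to make the conclusion uniform. There are at most $|\SSS|\cdot|\RR| = \TL(\SI\cdot\NS)$ source–landmark pairs, and on each $sr$ path the chosen sequence contains only $O(\log n)$ centers, since priorities take values in $\{0,1,\dots,\log\NS\}$ and are first strictly ascending and then strictly descending. Polynomially many intervals are absorbed into the $\TL$ notation, yielding the stated bound simultaneously for every ascending-side interval $xy$.

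The main obstacle I anticipate is the bookkeeping around the definition of ``priority''. A vertex has priority $\ge k+1$ iff it lies in at least one of the independently sampled sets $\CC_{k+1},\CC_{k+2},\dots$, so we must lower-bound this event by $\Pr[v\in\CC_{k+1}]$ alone; we also need the events ``$v$ is a center of priority $\ge k+1$'' to be independent across vertices, which follows from the vertex-by-vertex independent sampling of each $\CC_j$. Once this setup is fixed, the geometric tail estimate and union bound are entirely routine.
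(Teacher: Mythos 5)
Your proposal is correct and follows essentially the same route as the paper: both arguments observe that no vertex strictly between $x$ and $y$ can have priority $\ge k+1$, bound the probability that a stretch of $\Theta(2^k\sqrt{n/\SI}\log n)$ consecutive vertices misses $\CC_{k+1}$ by a geometric tail estimate, and finish with a union bound over the polynomially many intervals. Your version is if anything slightly more careful about the sampling probability of $\CC_{k+1}$ and the independence bookkeeping.
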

  \begin{proof}
  \iflong
  We will show that a path $P$ from $x$ to $s$ or $r$ on the $sr$ path of length $\ge 2^{k+1} \sqrt{\frac{n}{\SI}}\log n$ must have a vertex in $\CC_{k+1}$. This will imply that $|xy| = \TL( 2^{k} \sqrt{\frac{n}{\SI}})$.

  Consider the subpath $P[x,r]$. The probability that none of the first $2^{k+1} \NBS$ vertices in $P[x,r]$ have priority $k+1$ is $ (1-\frac{4}{2^{k}}\sqrt{\frac{\SI}{n}})^ {2^{k+1} \sqrt{\frac{n}{\SI}}\log n} \le \frac{1}{n^8}$. Since there are polynomial numbers of intervals and the centers are chosen independently of landmark vertices,  the statement of the lemma is true for every possible interval
  with high probability using the union bound.
  \else
  
  See the Appendix \ref{app:1} for proof.
  \fi
  \end{proof}

  If $c$ has priority $k$, then by the above Lemma \ref{lem:priork}, we just need to find the replacement path for the first $\TL(2^{k} \NBS)$ edges on $cs$ path. This is because we are sure that any interval
  ending at $c$ will contain $\TL(2^k \NBS)$ edges if $c$ has priority $k$. In
  the ensuing discussion, we will assume that we are finding the replacement path
  from $c$ to $s$ for the first $\TL(2^k \NBS)$ edges on $cs$ path.

  If $P$ is a small replacement path from $s$ to $c$
  avoiding a near edge, then we have already found it in Section \ref{sec:smallsuffix}. Thus, our
  aim will be to find following replacement paths.
  \begin{enumerate}
  \item $e$ is a far edge on $sc$ path.
  \item $e$ is a near edge but $|P| > |se| + 2 \NBS \log n$.
  \end{enumerate}

  We now show an important result which binds these replacement paths.
  \begin{lemma}
  \label{lem:rem1}
  Let $P$ be a replacement path from $s$ to $c \in \CC$ avoiding
  $e$. (a) If $e$ is an $i$-far edge on $sc$ path, then there
  exists a vertex $c' \in \CC_i$ on $\SUF(P)$ such that $|c'c| \le 2^i \NBS \log n$ and $e  \notin c'c$ (b)
  If $e$ is a near edge and $|P| > |se| + 2 \NBS \log n$,  then there exists a vertex
  $c' \in \CC_0$ on $\SUF(P)$ such that $|c'c| \le  \NBS \log n$ and  $e \notin c'c$.
  \end{lemma}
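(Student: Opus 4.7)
The lemma splits into two existence claims for $c'$ plus, in each case, a structural guarantee $e \notin c'c$. The existence claims are direct translations of Lemmas \ref{lem:rexists} and \ref{lem:rexists1}, with landmark vertices replaced by centers and the terminal $t$ replaced by the center $c$; since $\CC_i$ and $\CC_0$ are sampled with the same probabilities as $\RR_i$ and $\RR_0$, the Chernoff-and-union-bound analysis carries over unchanged. The work specific to this lemma is establishing $e \notin c'c$, and for this I would imitate, case-by-case, the contradiction used in Lemma \ref{lem:longP}.

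For part (a), I first use the analogue of Observation \ref{obs:sufflong}: since $e$ is $i$-far on the $sc$ path, $\SUF(P)$ starts before $e$ and hence has length at least $|ec| > 2^{i+1}\NBS\log n$. The sampling argument of Lemma \ref{lem:rexists}, applied to $\CC_i$ with density $\tfrac{4}{2^i}\sqrt{\tfrac{\SI}{n}}$, places some $c' \in \CC_i$ within the last $2^i\NBS\log n$ vertices of $\SUF(P)$ with high probability, giving $|c'c| \le 2^i\NBS\log n$. If $e = (u,v)$ were to lie on $c'c$, then the subpath of $c'c$ from whichever endpoint of $e$ appears later along $c'c$ down to $c$ would be a shortest path to $c$ of length strictly less than $|c'c| \le 2^i\NBS\log n$. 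But $i$-farness of $e$ on $sc$ forces $d(u,c), d(v,c) \ge 2^{i+1}\NBS\log n$, a contradiction.

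For part (b), the inequality $|P| > |se| + 2\NBS\log n$ combined with the decomposition $|P| = |sz| + |\SUF(P)|$ and $|sz| \le |se|$ (where $z$ is the vertex at which $P$ leaves $sc$) yields $|\SUF(P)| > 2\NBS\log n$, exactly as in Lemma \ref{lem:longsuffix}. The sampling argument of Lemma \ref{lem:rexists1}, applied to $\CC_0$, then delivers $c' \in \CC_0$ on $\SUF(P)$ with $|c'c| \le \NBS\log n$. To rule out $e \in c'c$, I would copy the construction from Lemma \ref{lem:longP}: with $e = (u,v)$ and $u$ the endpoint closer to $s$ on $sc$, form $P' = su + uc' + (c'c \diamond e)$, where the first segment is the prefix of $sc$ ending at $u$, the middle segment is the subpath of $c'c$ from $u$ to $c'$ (which avoids $e$ because $e$ is incident to $u$ on $c'c$), and the last segment is the subpath of $P$ from $c'$ to $c$ (which avoids $e$ and has length at most $\NBS\log n$). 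This produces an $s$-to-$c$ walk avoiding $e$ of length at most $|se| + 2\NBS\log n$, contradicting that the shortest such replacement path $P$ has length $> |se| + 2\NBS\log n$.

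I do not expect serious difficulty in either part; both halves are bookkeeping translations of earlier arguments. The main subtlety worth care is that in part (a) the contradiction must be extracted from the $i$-far hypothesis on $e$ rather than from a length bound on $P$, so one should verify cleanly that regardless of which endpoint of $e$ appears first along $c'c$, a subpath to $c$ of length strictly less than $2^{i+1}\NBS\log n$ is produced, violating the definition of an $i$-far edge.
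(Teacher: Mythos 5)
Your proposal is correct and follows essentially the same route as the paper: the paper's proof simply transfers Lemmas \ref{lem:rexists}, \ref{lem:rexists1}, and \ref{lem:longP} to centers, observing that $\CC_i$ and $\RR_i$ have identical sampling probabilities, and in case (a) rules out $e \in c'c$ from the $i$-far distance bound exactly as you do. You merely unfold the cited arguments (the Chernoff/union bound and the $P' = su + uc' + c'c\diamond e$ contradiction) in place of citing them, which changes nothing of substance.
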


  \begin{proof}
  \iflong

  \begin{enumerate}

  \item[(a)]
  Let us first consider the case when $e$ is an $i$-far
  edge. We can now apply Lemma \ref{lem:rexists}. Even though
  Lemma \ref{lem:rexists} is proved for landmark vertices,
  one can see that it can be used even using centers as sampling
  probability of both these sets are same. Thus, there exists a $c' \in \CC_i$ in
  $\SUF(P)$ such that $|c'c| \le 2^i \NBS \log n$. Since $e$ is
  an $i$-far edge, $e$ cannot lie in $c'c$ path.

  \item[(b)] Let us now consider
  the case when $e$ is a near edge and $|P| > |se| + 2 \NBS
  \log n$. By Lemma  \ref{lem:rexists1}, there exists a $c'  \in
  \SUF(P)$ such that $|c'c| \le  \NBS \log n$ and by Lemma \ref{lem:longP}, $e \notin c'c$. Again note that
  Lemma \ref{lem:rexists1} and  \ref{lem:longP} is proved for landmark vertices,
  one can see that it can be used even for centers in $\CC_0$.
  \end{enumerate}
  \else
  
  See the Appendix \ref{app:2} for proof.
  \fi

  \end{proof}

  We will make an auxiliary graph $G_s$ that will find all the required replacement paths from $s$ to each center.
  This graph will encode
  replacement path from $s$ to every center. After making this
  graph, we will run Dijkstra's algorithm on it. At the end of this section, we will show that the
  output of Dijkstra's algorithm will give us all required replacement paths. 
  \iflong
  \else
  For the construction and correctness
  of this auxilliary graph,
  we refer the reader to the Appendix \ref{app:4}.
  \fi

\iflong
   {\bf Construction of the auxiliary graph:}
  The graph $G_s$ contains a single source node $[s]$. For each center $c$, we add a node $[c]$ in $G_s$.
  For each $c \in \CC_k$, we will add an $\TL(2^{k}\NBS)$ nodes in $G_s$. These are $[c,e_1], [c,e_2], \dots,
  [c,e_{\ell 2^{k}\NBS \log n}]$ representing the first $\TL(2^k \NBS)$ edges on $cs$ path
  (where $\ell$ is a suitably chosen high constant).
  Let us now add edges in $G_s$. For each $[c,e]$, if $e$ is a near edge on $sc$ path having
  small replacement path, then we would have already found it in Section \ref{sec:smallsuffix}.
  If $w[c,e]$ is the weight of this small replacement path, then we add an edge from $[s]$ to $[c,e]$ with
  weight $w[c,e]$. Else for $[c,e]$, the replacement path $P$ satisfies
  the condition of Lemma \ref{lem:rem1}. To account for these replacement paths,
  we add an edge from $[c']$ to $[c,e]$ if $e \notin sc'$ and $e \notin c'c$.
  Further, we add an edge from $[c',e]$ (if it exists) to $[c,e]$ if $e \notin c'c$.

  {\bf Size of the auxiliary graph:}
  We now calculate the number of vertices in $G_s$. For each center of priority $k$,
  there are $\TL(2^k \NBS)$ nodes in $G_s$. Thus, the total number of nodes is $\sum_{k=1}^{\log n} \frac{\NS}{2^k} 2^{k} \NBS  = \TL(n)$. Let us now count the number of edges in $G_s$. In the worst case,
  there can be an edge between any two nodes in $G_s$. Thus, the number of edges in $G_s$
  is $\TL(n^2)$.

   {\bf Time taken to construct the auxiliary graph:}
  For each center $c$ of priority $k$, we add $\TL(2^k \NBS)$ vertices in $G_s$.
  This can be found by moving from $c$ to $s$ in the tree $\TT_s$.
  For all centers, this takes $\sum_{k=1} ^{\log n} \frac{\NS}{2^k} 2^k \NBS = \TL(n)$ time.
  Thus, each vertex can be added in $O(1)$ time in $G_s$.
  We will see that we can check if an edge can be added between any pair of node in $G_s$ in $O(1)$
  time. To add an edge from $[c']$ to $[c,e]$, we need to check if $e \notin sc'$ and
  $e \notin c'c$. This can be done using $\LCA$ queries in $\TT_s$ and $\TT_c$.
  Similarly, an edge from $[c',e]$ to $[c,e]$ can be added in $O(1)$ time.
  Thus, the time taken to construct $G_s$ is $\TL(n^2)$ in the worst case.

   {\bf Time taken to run Dijkstra's algorithm in the auxiliary graph:}
  We now run Dijkstra's algorithm in $G_s$. For each $[c,e] \in G_s$, we set $d(c,s,e)$
  to be the weight of the path from $[s]$ to $[c,e]$ as returned by  Dijkstra's algorithm.
   Since, there are $\TL(n)$ vertices
  and $\TL(n^2)$ edges in $G_s$, the total time to run Dijkstra's algorithm in $G_s$ is $\TL(n^2)$.
  Since we run the above algorithm for $\SI$ sources, the total time
  taken by the above procedure is $\TL(\SI n^2)$.

   {\bf Proof of Correctness:}
  We now prove the correctness of the above procedure.

  \begin{lemma}
  Let $c$ be a center of priority $k$.  Let $P$  be a replacement path from $s \in \SSS$ to $c \in \CC$ avoiding
  $e$ such that $e$ is one of first $\ell 2^{k} \NBS \log n$ edges in $cs$ path (where $\ell \ge 2$ is a suitably chosen constant).
  Then Dijkstra's algorithm in $G_s$ will set $d(c,s,e)$ to   $|P|$.
  \end{lemma}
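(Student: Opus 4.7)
I plan to prove $d(c,s,e)=|P|$ by strong induction on the number of edges of $P$. The matching direction $d(c,s,e)\ge|P|$ is routine: every $[s]$-to-$[c,e]$ path in $G_s$ decodes into a concrete walk in $G$ from $s$ to $c$ that avoids $e$ (each edge of $G_s$ is, by construction, either a genuine shortest path, a genuine replacement path, or the precomputed small replacement path from Section~\ref{sec:smallsuffix}, all of which avoid $e$ under the condition attached to the edge), so its weight is at least $|sc\diamond e|=|P|$. The substantive direction is therefore $d(c,s,e)\le|P|$, and I will exhibit, case by case, a $G_s$-path from $[s]$ to $[c,e]$ of weight exactly $|P|$.

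If $e$ is a near edge on $sc$ and $P$ is a small replacement path, the edge $[s]\to[c,e]$ of weight $w[c,e]=|P|$ inserted at the end of Section~\ref{sec:smallsuffix} already attains the bound. Otherwise Lemma~\ref{lem:rem1} furnishes a center $c'$ on $\SUF(P)$ with $e\notin c'c$ and with $|c'c|$ bounded by $2^i\NBS\log n$ in the $i$-far case (with $c'\in\CC_i$) or by $\NBS\log n$ in the long-near case (with $c'\in\CC_0$). Optimality of $P$ then gives the decomposition $P=P[s,c']+c'c$, in which $P[s,c']$ is the shortest $s$-to-$c'$ walk in $G$ avoiding $e$. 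Split on whether $e\in sc'$: if $e\notin sc'$ then $P[s,c']=sc'$, the node $[c']$ sits at distance $|sc'|$ from $[s]$ in $G_s$, and the edge $[c']\to[c,e]$ of weight $|c'c|$ (present because $e\notin sc'$ and $e\notin c'c$) yields a $G_s$-path of weight $|sc'|+|c'c|=|P|$; if $e\in sc'$ then $P[s,c']$ is a strictly shorter-in-edge-count replacement path from $s$ to $c'$ avoiding $e$, and provided the vertex $[c',e]$ is present in $G_s$, the induction hypothesis sets the Dijkstra value at $[c',e]$ to $|P[s,c']|$, and the edge $[c',e]\to[c,e]$ of weight $|c'c|$ closes a $G_s$-path of total weight $|P|$.

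The main technical step, and the one place the argument is not purely mechanical, is verifying that $[c',e]$ really is present in $G_s$ in the second sub-case, i.e.\ that $e$ lies among the first $\ell 2^{k'}\NBS\log n$ edges of the $c's$ path, where $k'$ is the priority of $c'$. I plan to bound the position of $e$ on $c's$ via the triangle inequality $|ec'|\le|ec|+|cc'|$: in the far case $e$ is $i$-far on $sc$, so $|ec|\le 2^{i+2}\NBS\log n$ and $|cc'|\le 2^i\NBS\log n$ with $k'=i$, giving $|ec'|\le 2^{i+3}\NBS\log n$; in the long-near case $|ec|<2\NBS\log n$ and $|cc'|\le\NBS\log n$ with $k'=0$, giving $|ec'|\le 3\NBS\log n$. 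Choosing the constant $\ell\ge 8$ in the construction of $G_s$ therefore places $e$ within the stored range of $[c',\cdot]$-nodes in both cases, and the induction goes through.
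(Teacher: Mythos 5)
Your proof is correct and follows essentially the same route as the paper's: induction on the edge count of $P$, a case split between the precomputed small-near-edge paths and the paths covered by Lemma~\ref{lem:rem1}, a further split on whether $e\in sc'$, and a triangle-inequality argument showing the node $[c',e]$ exists in $G_s$. The only differences are cosmetic improvements on your side: you explicitly dispose of the easy direction $d(c,s,e)\ge|P|$ (which the paper leaves implicit), and your bound $|ec|\le 2^{i+2}\NBS\log n$ for an $i$-far edge is the correct one from Definition of far edges (the paper writes $2^{i+1}$), which is why you land on $\ell\ge 8$ rather than a smaller constant.
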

	
  \begin{proof}

  We will prove the lemma by induction on the length of the path. Since $s$ is also a center in $\CC_0$, the base case is that our algorithm finds a path of length 0 from $s$ to itself.  Assume that there is a replacement path $P$ from $s$ to $c$ avoiding $e$ that satisfies the statement of the lemma. If $P$ is a small replacement path for a near edge, then we have already added an edge from $[s]$ to $[c,e]$ with an appropriate weight. So, assume that $P$ satisfies the statement of the Lemma \ref{lem:rem1}. By Lemma \ref{lem:rem1}, if $e$ is an $i$-far edge, then
  there exists a vertex $c' \in \CC_i$ in $\SUF(P)$ such that $|c'c| < 2^i \NBS \log n$  and $e \notin c'c $.
  We will first show that a node corresponding to  $c'$ always exists in $G_s$.
  We claim that either $sc'$ does not pass through $e$ or $e$ is one of the first $\ell 2^{i} \NBS \log n$
  edges in $c's$ path. This is because, $|c'c| < 2^i \NBS \log n$ and $|ec| \le 2^{i+1} \NBS \log n$,
  then by triangle inequality, if $e$ lies on $sc'$ path, then $|ec'| \le 2^{i+2} \NBS \log n$.
  Thus, the node $[c',e]$ will be added in $G_s$.

  Similarly, if $e$ is a near edge
  then there exists a vertex
  $c' \in \CC_{0}$ such that $e \notin c'c$. One can argue as above
  that either $e$ does not lie in $sc'$ or $[c',e]$ vertex exists in $G_s$.
  %there is a path $\SUF(P)$ which starts on $se$ path, say at $z$
  %and has length at most $2^{i+1} \NBS \log n$. Thus, if $e$ lies on $sc'$ path, we would have
  %a node $[c',e]$ in $G_s$ too.

  Thus, there are  following cases:

  \begin{enumerate}
  \item $e \notin sc'$

  In this case, we have $P = sc' + c'c$. In $G_s$, we have an edge from $[s]$ to $[c']$ with weight $|sc'|$ and  $[c']$ to $[c,e]$ with weight $|c'c|$ if $c'c$ does not pass through $e$.
  Thus, the Dijkstra's algorithm will be able to find this path.

  \item $e \in sc'$

  We have already shown that $[c',e]$ will exist in $G_s$.
  %If $P[s,c']$ is a small replacement path avoiding a near edge $e$ on $sc'$ path, then we
  %already have an edge from $[c]$ to $[c',e]$ with an appropriate weight.
  %Thus, there is an edge form $[c] \rightarrow [c',e] \rightarrow [c,e]$. Thus,
  %Dijkstra's algorithm will be able to find this replacement path.
  %Else,
  $P[s,c']$ is a replacement path whose edge length is strictly less
  than $|P|$. By induction hypothesis, we assume that Dijkstra's algorithm finds
  $P[s,c']$ correctly. Thus, it sets $d(c,s,e)$ to $|P|$  correctly.
  \end{enumerate}

  \end{proof}
	%\iflong
	\else
	%\vspace{-3mm}
	\fi

  Till now, we have found all replacement path that will be used in the
  second term of $\MTC$.
  We will now try to find replacement paths
  that will be used in the first term of $\MTC$.
  
  \subsection{Finding the replacement path from a center to a landmark vertex}
  \label{sec:centertolandmark}
  Now, we calculate the first term of $\MTC$.
  Let $P$ be a replacement path from a source $s$ to a landmark vertex $r$
  that is passing through the center $c$. To calculate the first term
  in $\MTC(s,r,e)$, we should calculate $|sc| + |cr \diamond e|$ if $e$ lies in the interval
  starting with $c$ on $sr$ path. This implies that
  the center $c$ lies in $sr$ path and $e$ lies in the interval starting with $c$ in $cr$ path.

  %From the above discussion, it follows that $e$ should lie in the interval starting with $c$ in $sr$
  %path.
  If  $c$ has priority $k$, then by Lemma \ref{lem:priork}, on any $sr$ path, we are sure to find a center of priority $k+1$ at a distance of $\TL(2^k \NBS)$ from $c$.
  Thus, we just need to find the replacement path for edges till a distance of $\TL(2^k \NBS)$ from $c$.
  %This is because we only want to find replacement paths for edges within the interval starting or ending with $c$.

  %Let us assume that there exists a small replacement path from $c$ to $r$ avoiding a near edge $e$.
  Before moving ahead, let us first make an important observation. We need to find a replacement path from $c$
  to $r$ avoiding $e$ only if there exists a replacement path from some source $s$ to $r$ avoiding $e$ that passes
  through $c$. Otherwise, there is no need to even find a replacement path from $c$ to $r$ avoiding $e$.

  Remember that a replacement path from a source to $r$ avoiding $e$ and passing through $c$
  can be of three types.

  \begin{enumerate}

    \item Small replacement path that avoids a near edge.
    \item Large replacement path that avoids a near edge.
    \item Replacement path that avoids a far edge.
  \end{enumerate}

  Let us look at the first set of replacement paths as we have already found these paths in Section \ref{sec:smallsuffix}.

  \subsubsection{Small replacement paths avoiding a near edge}
  \label{sec:generalsmall}
  For each $r \in \RR$, we have already found the replacement path from each source to $r$. We can enumerate all the edges on each of these replacement paths too.
  Remember that the algorithm in Section \ref{sec:smallsuffix} only finds the length of the small replacement path, not the replacement path itself.
  However, we can use Dijkstra's
  algorithm to find the actual path too. The time taken to enumerate a path is equal to the length of
  the path. Since, there are $\SI$ sources, $\TL(\NS)$ vertices in $\RR$, and only $\TL(\NBS)$ near edges, we need to enumerate $\TL(\SI \NS \NBS) = \TL(\SI n)$ paths. Since there can be $n$ edges on each path, the total
  time to enumerate all the paths is $\TL( \SI n^2)$.

  %For each $r$, there are $\TL(\SI \NS )$ such replacement paths. Thus, the total number of replacement paths to be %enumerated is $\TL(\SI n^2)$.
  We can pre-process each enumerated replacement path to find whether a vertex lies on it -- this can be done using LCA queries. For a landmark vertex $r$ and a center $c$, we can check if there exists a small replacement path  from a source to  $r$ passing through $c$ and avoiding a near edge $e$. To this end, we will first check all enumeration that represents a replacement path from a source to $r$ avoiding $e$. There are $\SI$ such enumerations. If $c$ lies in any of the enumerations, then there is a replacement path avoiding a near edge of small length passing through $c$. Thus, we can find a small replacement path from
  $c$ to $r$ avoiding a near edge $e$ in $\TL(\SI)$ time. Once again we reiterate, that we will find this replacement path only if there is a replacement path from a source to $r$ avoiding $e$ that passes through
  $c$. If there is no such path, then there is no need to find a small replacement path from $c$ to $r$ avoiding
  a near edge $e$.

  Given a $(c,r,e)$, the time taken to find a small replacement path is $O(\SI)$. Since there
  are $\TL(\NS)$ centers, $\TL(\NS)$ landmark vertices and $\TL(\NBS)$ near edges, the total
  time taken to find all small replacement paths for all possible $(c,r,e)$ tuples is
  $\TL( \NS \NS \NBS \SI) = \TL( \SI n^2)$. We store the length of all small replacement paths in $d(c,r,e)$
  (where $c \in \CC$ and $r \in \RR$ and $e$ is a near edge on $cr$ path) in a hash-table for efficient retrieval.

  \subsubsection{Other replacement paths}
  \label{sec:otherreplacementpaths}

  Once we have dealt with small replacement paths, two other types of replacement
  paths are left. Remember that this replacement path $P$ is from a source $s$ to a landmark vertex $r$
  that is passing through the center $c$. Also, $e$ lies in $cr$ path. And our aim is to
  find the length of the sub-path $P[c,r]$.

  The replacement path $P$ can be of two types:

  \begin{enumerate}
  \item $P$ avoids a far edge $e$ on $sr$ path and passes through $c$.

  If $e$ is a far edge on $sr$ path then
  it is also a far edge on $cr$ path.

  \item $e$ is a near edge but $|P| > |se| + 2 \NBS \log n$.

  Since $e$ is a near edge for $sr$ path, it is also a near edge
  in $cr$ path. Thus, even the subpath $P[c,r]$ satisfies,
  $|P[c,r]| > |ce| + 2 \NBS \log n$.
  \end{enumerate}

  Thus, we need to find a replacement path $P[c,r]$ such that:

  \begin{enumerate}
  \item $P[c,r]$ avoids a far edge $e$ on $cr$ path.
  \item $e$ is a near edge but $|P[c,r]| > |ce| + 2 \NBS \log n$.
  \end{enumerate}

  %Once again, we reiterate that we may find $P[c,r]$ paths only if
  %there is a corresponding path $P$ from a source to $r$ avoiding $e$ and passing through $c$.

  \begin{lemma}
  \label{lem:rem}
  Let $c$ lie on $sr$ path such that $e \in cr$. Let $P$ be a replacement path from $s$ to $r$ avoiding $e$  and passing through $c$.
  Let $P[c,r]$ be the corresponding replacement path from $c$ to $r$  avoiding $e$ such that (a) $e$ is a far edge on $cr$ path or (b)
  $e$ is a near edge and $|P[c,r]| > |ce| + 2 \NBS \log n$. Then
  (1) $|\SUF(P[c,r])| > 2 \NBS \log n$ and (2) There exists a vertex
  $r' \in \RR$  in $\SUF(P[c,r])$ such that $e \notin r'r$.
  \end{lemma}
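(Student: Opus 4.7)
The first step is to observe that $P[c,r]$ is itself a shortest replacement path from $c$ to $r$ avoiding $e$: since $P$ is shortest from $s$ to $r$ avoiding $e$ and passes through $c$, any strictly shorter $c$-to-$r$ path avoiding $e$ would yield a strictly shorter $P$. This subpath-optimality converts the whole lemma into an assertion about a genuine replacement path from $c$ to $r$ avoiding $e$, which in turn lets us recycle the technology already built for $(s,t,e)$ triples.

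For part (1), I split along the two hypotheses. In case (a), $e$ is a $k$-far edge on the $cr$ path for some $k\ge 0$, so Observation \ref{obs:sufflong} (applied with $c$ in place of $s$ and $r$ in place of $t$) immediately gives $|\SUF(P[c,r])|>2^{k+1}\NBS\log n\ge 2\NBS\log n$. In case (b), the hypothesis $|P[c,r]|>|ce|+2\NBS\log n$ is exactly the input to Lemma \ref{lem:longsuffix}, applied with $(c,r)$ in place of $(s,t)$; that lemma yields $|\SUF(P[c,r])|>2\NBS\log n$ directly.

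For part (2), I again split by case. In case (a), Lemma \ref{lem:rexists} applied to the replacement path $P[c,r]$ (reinterpreting $c$ as the source, $r$ as the destination, and the same $k$-far classification) produces $r'\in\RR_k$ on $\SUF(P[c,r])$ with $|r'r|\le 2^k\NBS\log n$ along that suffix; since $e$ is a $k$-far edge on $cr$ it sits at distance at least $2^{k+1}\NBS\log n$ from $r$, so no shortest $r'r$ path of length $\le 2^k\NBS\log n$ can contain $e$. In case (b), Lemma \ref{lem:rexists1} supplies a landmark $r'\in\RR_0$ on $\SUF(P[c,r])$ within $\NBS\log n$ of $r$, and then the argument of Lemma \ref{lem:longP}, transplanted from $(s,t,e)$ to $(c,r,e)$, forces $e\notin r'r$: otherwise the detour $cu+ur+rt\diamond e$ would give an alternative replacement path of length $\le |ce|+2\NBS\log n$, contradicting the hypothesis $|P[c,r]|>|ce|+2\NBS\log n$.

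The main point to check, and what I expect to be the only real obstacle, is that the high-probability guarantees of Lemmas \ref{lem:rexists} and \ref{lem:rexists1} still hold when their subject is a $c$-to-$r$ replacement path with $c\in\CC$ and $r\in\RR$ rather than a path from a source in $\SSS$. This is handled by exactly the remark the authors used in Lemma \ref{lem:rem1}: the sampling probability defining $\RR_k$ is intrinsic to the vertex set, not to whether the endpoints of the analyzed path are sources, so the same per-path failure bound of $n^{-4}$ applies; a union bound over the at most $|\CC|\cdot|\RR|\cdot|E|=\mathrm{poly}(n)$ triples $(c,r,e)$ preserves the high-probability conclusion.
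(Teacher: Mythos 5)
Your proposal is correct and follows essentially the same route as the paper's proof: part (1) via the divergence-before-$e$ argument (Observation \ref{obs:sufflong} / Lemma \ref{lem:longsuffix}) and part (2) via Lemmas \ref{lem:rexists}, \ref{lem:rexists1} and the detour argument of Lemma \ref{lem:longP}, transplanted to the pair $(c,r)$. Your explicit treatment of the union bound over $(c,r,e)$ triples and of the exact $2^k\NBS\log n$ distance bound for $k$-far edges is slightly more careful than the paper's one-line remark, but it is the same argument.
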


  \begin{proof}
  \iflong
  \begin{enumerate}
  \item If $e$ is a far edge, $P[c,r]$ has to diverge from $cr$
  before $e$. Thus, $\SUF(P[c,r])$
  will be $> 2 \NBS \log n$. And by Lemma \ref{lem:longP},
  even for
  a near edge, if $|P[c,r]| > |ce| + 2 \NBS \log n$, then $|\SUF(P[c,r])|
  > 2\NBS \log n$.

  \item  Let us first consider the case when $e$ is a far
  edge. By Lemma \ref{lem:rexists}, there exists a $r'$ in
  $\SUF(P[c,r])$ such that $|r'r| \le \NBS \log n$. Since $e$ is
  a far edge, $e$ cannot lie in $r'r$ path. Let us now consider
  the case when $e$ is a near edge nd $|P[c,r]| > |ce| + 2 \NBS
  \log n$. By Lemma \ref{lem:rexists1} and \ref{lem:longP}, there exists a $r'  \in
  \SUF(P[c,r])  $ such that $e \notin r'r$.
  \end{enumerate}
  \else
  
  See the Appendix \ref{app:3} for proof.
  \fi
  \end{proof}

  %We will now use the information gained above to process other replacement paths.
  With all armoury at hand, we are now ready to find the required replacement
  paths from a center to all landmark vertices. Fix a center $c$ with priority $k$. We will now find the replacement path from $c$ to each landmark vertex.
  %that satisfies the conditions of Lemma \ref{lem:rem}.
  Also remember that
   we want to find replacement path avoiding all  edges at a distance $\TL(\sqrt{\frac{ n }{ \SI}}2^k)$ from $c$. %Remember that we want to find all the replacement paths that satisfy the conditions of Lemma \ref{lem:rem} and %pass through $c$.
   We will now create an auxiliary graph that will help us in finding all
   the required replacement paths.  
   \iflong
   \else
   For the construction and correctness
  of this auxilliary graph,
  we refer the reader to the Appendix \ref{app:5}.
  \fi

  \iflong
  {\bf Construction of the auxiliary graph: }
  We make an auxiliary directed weighted graph $G_c$ with a single source node $[c]$. In this graph, for each landmark vertex $r$, we will have at most  $\TL(2^k\NBS)$ nodes, $[r,e_1], [r,e_2], \dots, [r,e_{\ell2^k\NBS \log n}]$ (where $\ell \ge 2$ is a suitably chosen constant). The second term in the tuple represents the first $\TL(2^k\NBS)$ edges  on the $cr$ path. Also, there is a  node $[r]$ for each landmark vertex $r$.
  We now add edges in $G_c$.  There is an edge from $[c]$ to  $[r]$  (where $r \in \RR$) of weight $|cr|$. For each $[r,e]$, there are three types of incoming edges to it.

  \begin{enumerate}
  \item If there exists a small replacement path from a source to $r$ avoiding the near edge $e$ passing through $c$, then we have already found it in Section \ref{sec:generalsmall}. Let the weight
  of this path be $w[c,r,e]$.  We add an edge from $[c]$ to $[r,e]$ with the weight $w[c,r,e]$.

  \item Edge from $[r']$ (where $r' \in \RR)$ to $[r,e]$ of
  weight $|r'r|$ if $cr'$ path does not pass through $e$
   and $r'r$ path does not pass through $e$.

  \item For each $r' \in \RR$, if $[r',e]$ exists, then there
  is an edge from $[r',e]$ to $[r,e]$ of weight $|rr'|$
  if $r'r$ path does not pass through $e$.
  \end{enumerate}
     This completes the description of $G_c$.

  {\bf Size of the auxiliary graph:}
  Let us first count the number of vertices in $G_c$. For each $r \in \RR$, there is a node $[r]$ in $G_c$. Let us now count the number of nodes of type $[r,e]$ in $G_c$. Since there  are $\TL(\NS)$ vertices in $\RR$ and at most $\TL(2^k\NBS)$ nodes of tupe $[c,e]$, there are at most $\TL(\NS \NBS 2^k) = \TL(n2^k)$ vertices in the auxiliary graph.
  Let us now count the number of edges in $G_c$. There may be an edge from $[c]$ to every
  other vertices in the graph.
   Each vertex of type $[r']$ can have an edge to all other vertices in the graph. Thus the number of such edges is $\TL\ (n2^k\NS)$. Each vertex of type $[r',e]$  can have edge to at most $\TL( \NS)$ other edges of type $[r,e]$. There are at most $\TL(n2^k\NS)$ such edges. Thus, in total there are $\TL(n2^k\NS)$ edges $G_c$.
   %Thus, the running time of Dijkstra's algorithm is $\TL(n 2^k \NS )$.

   {\bf Time taken to construct the auxiliary graph:}
   For each landmark vertex $r$, we add at most $\TL(2^k \NBS)$ vertices in $G_c$.
   This can be done by moving up from the vertex $c$ in $\TT_r$. Thus, adding a vertex in
   $G_c$ takes $O(1)$ time. We will see that even adding an edge in $G_c$ takes $O(1)$
   time. To add an edge from $[c]$ to $[r,e]$, we need to check
   if there is a small replacement path from $c$ to $r$ avoiding $e$.
   We have already found this path in Section \ref{sec:generalsmall} and can be retrived
   in $O(1)$ time from the hash-table. Similarly, to add an edge from $[r']$
   to $[r,e]$, we can check if $e \in cr'$ (using $\LCA$ query in $\TT_c$) and
   $e \in r'r$ (using $\LCA$ query in $\TT_{r'}$) in $O(1)$ time.
   Similarly, a edge from node $[r',e]$ to $[r.e]$ can be added in $O(1)$
   time. The reader can check that the time taken to construct this auxiliary graph
   is proportional to its size.

  {\bf Time taken to run Dijkstra's algorithm in the auxiliary graph: }
  We now run Dijkstra's algorithm in $G_c$. For each $[r,e] \in G_c$, we set $d(c,r,e)$
  to be the weight of the path from $[c]$ to $[r,e]$ as returned by Dijkstra's algorithm.
    % Let us now calculate the running time of this algorithm.
     %The reader can also check
    %  that the time taken to make this graph is proportional to the number of vertices and the number of edges
    %  in the graph.
  We now calculate the time taken to run Dijkstra's algorithm.
  The time taken to run Dijkstra's algorithm in $G_c$ is $\TL(n 2^k \NS)$.
     Since there are $\frac{\NS}{2^k}$ centers of priority $k$ and at most $\log n$ such priorities, the total time taken to run Dijkstra's algorithm in all auxiliary graphs is $\TL(\sum_{k=0}^{\log n}  \frac{\NS}{2^k}  n\NS 2^k)  =\TL(\SI n^2)$.
  %We then run Dijkstra's algorithm on $G_c$ and find the shortest path for each vertex in $G_c$.

  {\bf Proof of Correctness: }
  We now show that the shortest path calculated above gives us the replacement path from $c$ to all landmark vertices.
  \begin{lemma}
   Let $c$ be a center of prioroty $k$. Let $P$ be a replacement path from $s$ to $r$ avoiding $e$  and passing through $c$ such that $e$ is one of
   the first $\TL(2^k \NBS)$ edges on $cr$ path. Then Dijkstra's algorithm in $G_c$ gives the replacement path $P[c,r]$ avoiding $e$.
  \end{lemma}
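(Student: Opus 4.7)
The plan is to prove the lemma by strong induction on the edge-length of $P[c,r]$, establishing that Dijkstra's output at $[r,e]$ in $G_c$ equals $|P[c,r]|$. The argument mirrors the source-to-center induction in Section~\ref{sec:centertosource}, and will show both inequalities.

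For the lower bound, I will observe that every edge of $G_c$ projects to a sub-walk of some valid $c$-to-$r$ walk that avoids $e$: the direct $[c]\to[r,e]$ edges encode a small replacement path obtained in Section~\ref{sec:generalsmall}, the $[r']\to[r,e]$ edges are guarded by the conditions ``$cr'$ avoids $e$'' and ``$r'r$ avoids $e$,'' and the $[r',e]\to[r,e]$ edges are guarded by ``$r'r$ avoids $e$.'' Any $[c]$-to-$[r,e]$ path in $G_c$ therefore projects to a walk in $G$ from $c$ to $r$ avoiding $e$ of the same total weight, so Dijkstra's distance is at least $|cr \diamond e|=|P[c,r]|$.

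For the upper bound I will induct on the number of edges of $P[c,r]$. If $P[c,r]$ is a small replacement path avoiding a near edge, the direct $[c]\to[r,e]$ edge inserted in Section~\ref{sec:generalsmall} has weight exactly $|P[c,r]|$ and Dijkstra takes it. Otherwise $P[c,r]$ satisfies Lemma~\ref{lem:rem}, yielding a landmark $r'\in\RR$ on $\SUF(P[c,r])$ with $e\notin r'r$ and a small bound on $|r'r|$; the probabilistic slack in the sampling lets me pick $r'$ distinct from $r$. I then split on whether $e\in cr'$. If $e\notin cr'$, the two-hop path $[c]\to[r']\to[r,e]$ of weight $|cr'|+|r'r|$ lies in $G_c$ and, by minimality of $P[c,r]$, equals $|P[c,r]|$. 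If $e\in cr'$, I first verify $[r',e]\in G_c$: since $e\in cr'$, the $c$-to-$e$ distance along $cr'$ equals $|ce|$, which is at most $\ell 2^k\NBS\log n$ by hypothesis, so $[r',e]$ is indeed a node. Applying the induction hypothesis to the strictly shorter replacement path $P[c,r']=cr'\diamond e$ gives Dijkstra weight $|P[c,r']|$ at $[r',e]$, and the edge $[r',e]\to[r,e]$ of weight $|r'r|$ is present (as $r'r$ avoids $e$); summing yields $|P[c,r']|+|r'r|=|P[c,r]|$.

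The step I expect to be the main obstacle is ensuring that the inductive step is genuinely strictly decreasing, i.e.\ that the landmark vertex $r'$ can be chosen with $r'\neq r$ so that $P[c,r']$ has strictly fewer edges than $P[c,r]$. This is exactly where the high-probability sampling arguments of Lemmas~\ref{lem:rexists} and~\ref{lem:rexists1} do the real work: even after excluding the single point $r$ from the sampled segment of $\SUF(P[c,r])$, many candidate vertices remain and with overwhelming probability at least one of them lies in the appropriate set $\RR_i$ or $\RR_0$. Once this is in place, the remaining checks (existence of $[r',e]$, avoidance of $e$ on the relevant shortest sub-paths, and the decomposition $P[c,r]=P[c,r']+r'r$) are routine.
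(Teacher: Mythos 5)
Your proof follows essentially the same route as the paper's: induction on the edge-length of $P[c,r]$, invoking Lemma~\ref{lem:rem} to extract a landmark $r'$ with $e \notin r'r$, splitting on whether $e \in cr'$, justifying the existence of the node $[r',e]$ via the shared $ce$ prefix, and closing the second case by the induction hypothesis. The two points you add --- the explicit soundness direction (every $[c]$-to-$[r,e]$ path in $G_c$ projects to a walk in $G$ avoiding $e$) and the care that $r'$ can be taken distinct from $r$ so the induction strictly decreases --- are reasonable refinements of details the paper elides, but they do not change the approach.
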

  \begin{proof}

  We will prove the lemma by induction on the length of the path $P[c,r]$. The base case is that our algorithm finds a path of length 0 from $c$ to itself (for the base case we will assume that $c$ is also a landmark vertex).  Assume that there is a replacement path $P$ from $s$ to $r$ avoiding $e$ that passes through $c$. If $P$ is a small replacement path for a near edge, then we have already added an edge from $[c]$ to $[r,e]$ with an appropriate weight. So, assume that $P$ satisfies the statement of the Lemma \ref{lem:rem}. By Lemma \ref{lem:rem},  there exists a vertex
  $r' \in \RR$ such that $e \notin r'r$. We now claim that either $e \notin cr'$ or
  $e$ is one of the first $\TL(2^k \NBS )$ edges in $cr'$ path. This is because
  if $e \in cr'$ and $e \in cr$, then the subpath $ce$ is same for both of them.

  There are following cases:

  \begin{enumerate}
  \item $e \notin cr'$

  In this case, we have $P[c,r] = cr' +\ r'r$. In $G_c$, we have an edge from $[c]$ to $[r']$ with weight $|cr'|$ and  $[r']$ to $[r,e]$ with weight $|r'r|$ if $r'r$ does not pass through $e$.
  Thus, the Dijkstra's algorithm will be able to find $P[c,r]$.

  %\item $e \in cr'$ but  $P[c,r']$ is a small replacement path avoiding a near edge on $sr'$ path.

  %In this case, we already have an edge from $[c]$ to $[r',e]$ with an appropriate weight of the subpath of $P$.
  %Thus, there is an edge form $[c] \rightarrow [r',e] \rightarrow [r,e]$. Thus,
  %Dijkstra's algorithm will be able to find this replacement path.

  \item $e \in cr'$ %and $P$ satisfies the statement of the Lemma \ref{lem:rem}.

  In this case,  the replacement path from $c$ to $r$ is $cr' \diamond e + r'r$.
  We have already shown that $[c,r']$ exists in $G_c$.
  We have added an edge from
  $[r',e]$ to $[r,e]$ with weight $|rr'|$.
  By induction hypothesis, we assume that Dijkstra's algorithm correctly calculates $cr' \diamond e$. Thus, it will correctly calculate $cr \diamond e$ too.

  \end{enumerate}
  
  \end{proof}
\else
  \fi

  Given the result in this section (Section \ref{sec:centertolandmark}) and the result in Section \ref{sec:centertosource}, we can now calculate the first two terms in the path cover lemma (See Lemma \ref{lem:pathcover}).  In the ensuing discussion, we will be calculating a replacement path from a source to
  a landmark vertex that avoids an entire interval.

  \subsection{Replacement path avoiding an interval}
  In this section, we will find the replacement path from $s$ to $r$ avoiding the interval that contains edge $e$. To this end, we use the concept of bottleneck vertex (adapted as bottleneck edge for our purpose) introduced in \cite{Bernstein2009}. Let us first define a few terms which will be used in this section.

  \begin{definition}
  Let $[s,r,i]$ denote the $i$-th interval on the $sr$ path where $s \in \SSS$ and $r \in \RR$.   A bottleneck
  edge is  the hardest edge on this interval
  to avoid. Formally, the bottleneck edge  $\BB[s,r,i] := \max_{e \in [s,r,i]} \{sr \diamond e\}$.
  \end{definition}

   Thus, the path cover lemma looks as follows for an edge that lies in the $i$-th interval of $sr$ path.

  \begin{lemma}
  \label{lem:pathcovermod}
  If $e$  lies on the $i$-th interval on the $sr$ path,
  then \\
  $sr \diamond e= \min\begin{cases}
  \MTC(s,r,e),\\
  sr \diamond \BB[s,r,i]
  \end{cases}$
  \end{lemma}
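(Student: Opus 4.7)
The plan is to derive this reformulated path cover lemma directly from the original Lemma \ref{lem:pathcover} together with the definition of $\MTC$ and of the bottleneck edge $\BB[s,r,i]$. Since $\MTC(s,r,e)$ is literally the minimum of the first two terms appearing in Lemma \ref{lem:pathcover}, the work reduces to showing that the third term $sr \diamond [c_1c_2]$ (the shortest $sr$ path avoiding the entire $i$-th interval) can be swapped for $sr \diamond \BB[s,r,i]$ (the shortest $sr$ path avoiding only the bottleneck edge of that interval) without changing the overall minimum.

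For the upper bound $sr \diamond e \leq \min\{\MTC(s,r,e),\, sr \diamond \BB[s,r,i]\}$, the inequality $sr \diamond e \leq \MTC(s,r,e)$ is immediate from Lemma \ref{lem:pathcover}, while $sr \diamond e \leq sr \diamond \BB[s,r,i]$ falls out of the definition of $\BB[s,r,i]$ as the edge in the $i$-th interval maximizing $sr \diamond e'$, so $sr \diamond \BB[s,r,i] \geq sr \diamond e$ for every $e$ in that interval. For the matching lower bound $sr \diamond e \geq \min\{\MTC(s,r,e),\, sr \diamond \BB[s,r,i]\}$, I would split on which term attains the minimum in Lemma \ref{lem:pathcover}: if one of the first two terms does, then $sr \diamond e = \MTC(s,r,e)$ and the bound is trivial; otherwise $sr \diamond e = sr \diamond [c_1c_2]$, and since any $sr$ path avoiding the entire interval is in particular a replacement path for the single edge $\BB[s,r,i]$, we get $sr \diamond [c_1c_2] \geq sr \diamond \BB[s,r,i]$, which gives $sr \diamond e \geq sr \diamond \BB[s,r,i]$.

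I do not expect any real obstacle here: the entire argument consists of two short inequalities plus a case split, and invokes nothing beyond Lemma \ref{lem:pathcover} and the definition of $\BB[s,r,i]$. As a small sanity check, combining the two bounds also forces $sr \diamond [c_1c_2] = sr \diamond \BB[s,r,i]$ whenever the third term of Lemma \ref{lem:pathcover} is the minimizer, confirming that replacing ``avoid the whole interval'' by ``avoid only its hardest edge'' really is lossless in this setting.
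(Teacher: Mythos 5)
Your proposal is correct and matches the paper's own argument: both rest on the same two observations, namely that a path avoiding the entire $i$-th interval in particular avoids $\BB[s,r,i]$, and that the definition of the bottleneck edge gives $sr \diamond e \le sr \diamond \BB[s,r,i]$ for every $e$ in the interval. The paper phrases it as a case split on whether the optimal replacement path avoids the interval, while you phrase it as matching upper and lower bounds, but the content is identical.
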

  \begin{proof}
   If $|sr \diamond e|$ avoids the $i$-th interval
  then it avoids $\BB[s,r,i]$ too.
  Thus, $|sr \diamond e| \ge |sr \diamond \BB[s,r,i]|$.
    But $\BB[s,r,i]$ is the bottleneck edge of the $i$-th interval, so
    $|sr \diamond e| \le |sr \diamond \BB[s,r,i]|$.
    This implies that    $|sr \diamond e| = |sr \diamond \BB[s,r,i]|$.
    If $|sr \diamond e|$ does not avoid the $i$-th interval, then
    $|sr \diamond e| = \MTC(s,r,e)$. This proves the statement of the lemma.

  \end{proof}

  Thus, two things are left now: find the bottleneck edge for each interval and then find the replacement path avoiding the bottleneck edge.

  \subsubsection{Finding Bottleneck edge in each interval in $sr$ path}
  \label{sec:findbottleneck}
  
  We now show how to find a bottleneck edge in the $i^{th}$ interval of the path $sr$. We first observe that the bottleneck edge will have the highest $\MTC$ value among all edges in the $i$-th interval. This is true as by Lemma \ref{lem:pathcovermod}, the last term for each edge in the interval is the same. So, to find a bottleneck edge, we should look at the edge in the interval which maximizes the first two terms, that is the $\MTC$ value.

   To find the bottleneck edge of
  the $i$-th\ interval, we just need to go over each edge in the
  $i^{th}$ interval and check the $\MTC$ value (whose constituents we have already calculated).  This takes $O(n)$
  time for all intervals on the $sr$ path. Since there
  are $\SI$ sources and $\TL(\NS)$ vertices in $\RR$, the
  total time taken to find  bottleneck edges is $\TL(n\
  \SI \NS) = \TL(\SI n^2)$.

  \subsubsection{Finding the replacement path avoiding the bottleneck edge in $sr$ path}

  Let us now find the replacement path $P$ avoiding the bottleneck edge of the $i$-th interval on the $sr$ path.
  Let $e \leftarrow \BB[s,r,i]$, that is $e$ is the bottleneck edge of the $i$-th interval in $sr$ path.
  If $P$ is a small replacement path avoiding a near edge on $sr$ path, then we would have already found it in Section \ref{sec:smallsuffix}. Thus, our focus will be to find $P$ when:
  \begin{enumerate}
   \item $P$ avoids a far bottleneck edge $e$.
   \item $P$ avoids a near bottleneck edge $e$ but $P > |se| + 2 \NBS \log n$.
  \end{enumerate}

   By Lemma \ref{lem:rem}, there exists a $r'$ in $\SUF(P)$ such that $e \notin r'r$. We will now crucially use this property to make another auxiliary graph $G_s$.

   {\bf Construction of the auxiliary graph:}
   This graph $G_s$ contains a source vertex $[s]$. There is a vertex $[r]$ for each $r \in \RR$. For a bottleneck edge of interval $i$ in $sr$ path, there is a vertex $[s,r,i]$ where $i \le \log n$.
  We now find the edges in $G_s$. There is an edge from $[s]$ to $[r]$ with weight $|sr|$.
  If $\BB[s,r,i]$ happen to be a near edge whose replacement path has small weight, then we  add an edge from $[s]$ to $[s,r,i]$ with appropriate weight (see Section \ref{sec:smallsuffix}). Else,  there can be three types of edges to $[s,r,i]$.
  \begin{enumerate}
  \item Edge from $[s]$ to $[s,r,i]$ of weight $\MTC(s,r,\BB[s,r,i])$. This edge represents the first
  two terms in the path cover lemma for the bottleneck edge $\BB[s,r,i]$ (we have already calculated these
  in Section \ref{sec:centertosource} and \ref{sec:centertolandmark}).

  \item For each $r' \in \RR$, there is an edge from $[s]$ to $[s,r,i]$ with weight $\MTC(s,r',\BB[s,r,i]) +\ |r'r|$ if $\BB[s,r,i]$ does not lie in $r'r$.

  \item If  $\BB[s,r,i]$ lies in the $j^{th}$ interval on the $sr'$ path, then there is an edge from $[s,r',j]$ to $[s,r,i]$ with weight $|r'r|$ if $\BB[s,r,i]$ does not lie in $r'r$ path.
  \end{enumerate}

   This completes the construction of $G_s$.

  {\bf Size of the auxiliary graph:}
  The number of nodes of type $[r]$ in $G_s$ is $\TL(\NS)$.
  The number of nodes in $G_s$ of type $[s,r,i]$ is $\TL(\NS)$ since there are $\TL(\NS)$ landmark vertices and $\log n$ interval in any $sr$ path. We now find the number of edges in $G_s$. For each vertex $[s,r,i]$, there are at most $\TL (\NS)$ from the source $[s]$ (due to point (2) in above enumeration). Also, there are at most $\TL(\NS)$ edges from other vertices in $\RR$ (due to point (3) in above enumeration). Thus, the total number of edges in $G_s$ is $\TL ( \NS\NS) = \TL(n^2)$.
  %The reader can check that we can make this graph in time proportional
  %to the size of $G_c$.

  {\bf Time taken to construct the auxiliary graph: }
  For each source, we have already found the bottleneck edge of each interval in
  $sr$ path in Section \ref{sec:findbottleneck}(where $r$ is a landmark vertex).
  Thus, adding vertices in $G_s$ takes $O(1)$ time. If there is a small replacement
  path from $[s]$ to $[s,r,i]$, then we have already found it in Section \ref{sec:smallsuffix}
  and can be added in $O(1)$ time. We add an edge from $[s]$ to $[s,r,i]$ with
  $\MTC(s,r,\BB[s,r,i])$. Again, we have calculated the $\MTC$ term in Section \ref{sec:centertosource}
  and \ref{sec:centertolandmark}. So, we can add this edge in $O(1)$ time.
  For each $r' \in \RR$, we add an edge from $[s]$ to $[s,r,i]$ if $e \notin r'r$.
  Again, this edge can be added in $O(1)$ time. The hardest part is point (3)
  in the above enumeration. For each interval $[s,r',j]$,
  we first need to check if $\BB[s,r,i]$ in the $j$-th interval in
  $sr'$ path. This can be done by first finding if $\BB[s,r,i]$ lies
  in $sr'$ path -- by doing $\LCA$ queries in $\TT_s$. If $\BB[s,r,i]$
  lies in $sr'$ path, then we can calculate the distance of $\BB[s,r,i]$
  relative to $s$ and $r'$. This can be done easily as we have already
  stored distances from $s$ to all other vertices in the graph in $d(s.\cdot)$ (in the pre-processing phase).
  Thus, all edges in $G_s$ can be added in $O(1)$ time.
  %The reader can
  %check that even other edges can be added in $G_s$ in $O(1)$ time.
  Thus, the time taken
  to construct $G_s$ is equal to the worst case size of $G_s$, that is $\TL(n^2)$.
  %One can see that this graph encodes the shortest path from any source to all the destination in $\RR $ avoiding the bottleneck edges. Let us now find the time taken to make this graph
  % and run Dijkstra's algorithm on it.

  {\bf Time taken to run Dijkstra's algorithm in the auxiliary graph: }
   We run Dijkstra's algorithm in $G_s$ to find the shortest replacement path
   for each bottleneck edge. We set $d(s,r,\BB[s,r,i])$ to the weight of the shortest path
   from $[s]$ to $[s,r,i]$ as returned by Dijkstra's algorithm in $G_s$. The time taken by Dijkstra's algorithm
   in $G_s$ is  $\TL( n^2)$. Since there are $\SI$ such graphs, the total time
  taken is $\TL( \SI n^2)$.

  {\bf Proof of Correctness: }
  We now prove the correctness of the above algorithm.
  \begin{lemma}
  \label{lem:last}
  Let $P$ be the shortest path from $s \in \SSS$ to $r \in \RR$ avoiding the
   bottleneck edge in the $i$-th interval  of $sr$ path. Then Dijkstra's algorithm in $G_s$ correctly
  finds $P$.
  \end{lemma}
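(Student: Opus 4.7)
Let $e := \BB[s,r,i]$ and let $P := sr \diamond e$ be the true shortest replacement path. The strategy is to show that the $[s]$-to-$[s,r,i]$ distance computed by Dijkstra in $G_s$ equals $|P|$. As with every Dijkstra-on-auxiliary-graph argument in the paper, I would split this into a soundness direction (every $[s]$-to-$[s,r,i]$ path in $G_s$ corresponds to a concrete walk in $G$ avoiding $e$) and a completeness direction (some $[s]$-to-$[s,r,i]$ path in $G_s$ has weight exactly $|P|$). Soundness is a direct inspection of the three edge types: type (1) has weight $\MTC(s,r,e)$, which is by definition the length of a valid replacement path avoiding $e$; type (2) combines $\MTC(s,r',e)$ with the shortest $r'r$-path guarded by $e\notin r'r$; and type (3) concatenates an inductively correct bottleneck-avoidance path to $r'$ with the $r'r$-path, again guarded by $e\notin r'r$. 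The weight annotations then upper-bound actual replacement lengths in $G$.

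The meat of the argument is completeness. I would proceed by strong induction on the integer $|P|$. The base case is when $P$ is captured by the special edge from $[s]$ to $[s,r,i]$ of small-replacement weight (the case handled in Section~\ref{sec:smallsuffix}), or when $|P|=\MTC(s,r,e)$, captured directly by the type~(1) edge. Otherwise, by the path cover lemma (Lemma~\ref{lem:pathcovermod}), $P$ avoids the entire $i$-th interval and $e$ is either a far bottleneck or a near bottleneck with $|P|>|se|+2\NBS\log n$. In either subcase the hypothesis of Lemma~\ref{lem:rem} is satisfied, so there is a landmark $r'\in\RR$ on $\SUF(P)$ with $e\notin r'r$, giving the decomposition $P = sr'\diamond e + r'r$.

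I would then apply the path cover lemma once more to the shorter replacement path $sr'\diamond e$. If $|sr'\diamond e|=\MTC(s,r',e)$, the type~(2) edge of weight $\MTC(s,r',e)+|r'r|$ yields a $[s]$-to-$[s,r,i]$ path of exactly the right length. If instead $sr'\diamond e$ coincides with $sr'\diamond \BB[s,r',j]$ for the interval $j$ on $sr'$ containing $e$, then since $|sr'\diamond e|<|P|$ (because $|r'r|\ge 1$, as the suffix has length at least $2\NBS\log n$), the induction hypothesis guarantees that Dijkstra has already settled $[s,r',j]$ with the correct value, and the type~(3) edge from $[s,r',j]$ to $[s,r,i]$ of weight $|r'r|$ completes a path of total weight $|P|$.

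The main obstacle I anticipate is bookkeeping of the corner cases rather than the core induction. In particular, when $e\notin sr'$ the formula $\MTC(s,r',e)$ is not literally defined by the bottleneck construction, and the argument must reinterpret it as $|sr'|$ (the $s$-to-$r'$ shortest path trivially avoids $e$), so the type~(2) edge still represents a legitimate walk. I would also have to verify that the preconditions of Lemma~\ref{lem:rem} transfer from $P$ being a shortest replacement path avoiding a bottleneck edge $e$ to the relevant far/large-near dichotomy used in that lemma; the remaining case, where $P$ is a small replacement path avoiding a near edge, is precisely the one absorbed into the direct $[s]$-to-$[s,r,i]$ edge added during the Section~\ref{sec:smallsuffix} pass. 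With these cases dispatched, Dijkstra's standard optimality on $G_s$ yields $d(s,r,\BB[s,r,i]) = |P|$, proving the lemma.
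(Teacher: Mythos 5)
Your proposal is correct and follows essentially the same route as the paper: induction on the length of the replacement path, the landmark vertex $r'$ on $\SUF(P)$ with $e \notin r'r$ supplied by Lemma~\ref{lem:rem}, the path cover lemma applied to $sr' \diamond e$, and the case split between the type~(2) edge ($\MTC$ term) and the type~(3) edge (inductively settled bottleneck path through $[s,r',j]$). The only additions are your explicit soundness direction and the remark about reinterpreting $\MTC(s,r',e)$ when $e \notin sr'$ (which the paper instead handles via the $[s]\to[r']\to[s,r,i]$ edges), both of which are harmless.
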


  \begin{proof}
  \iflong
  We will prove using induction on the edge length of the replacement paths.
  Since a source is also in $\RR$, Dijkstra's algorithm correctly
  finds the replacement path of length 0 (which is our base case).
  Let us assume that the number of edges in $P$ is $k$. By induction hypothesis,
  Dijkstra's algorithm has correctly found replacement path from $s$ to any $r' \in \RR$
  avoiding a bottleneck edge on $sr'$ path whose edge length is $< k$.
  If $P$ is a small replacement path avoiding a near bottleneck edge $\BB[s,r,i]$
  in $sr$ path, then we would have already found it in Section \ref{sec:smallsuffix}
  and put an edge from $[s]$ to $[s,r,i]$ of appropriate weight in $G_s$.
  So, assume that $P$ satisfies the statement of the Lemma \ref{lem:rem1}.
  (Remember that we prove this lemma for centers, but the reader can check that
  the same lemma holds for landmark vertices as both the
  sets have same sampling probability).
  By Lemma \ref{lem:rem1},  there exists a vertex
  $r' \in \RR$ such that $\BB[s,r,i] \notin r'r$.

  There are following cases

  \begin{enumerate}
  \item $\BB[s,r,i] \notin sr'$

  In this case, we have $P = sr' +\ r'r$. In $G_s$, we have an edge from $[s]$ to $[r']$ with weight $|sr'|$ and  $[r']$ to $[s,r,\BB[s,r,i]]$ with weight $|r'r|$.
  Thus, the Dijkstra's algorithm will be able to find this path.

  \item $\BB[s,r,i] \in sr'$

  Let us assume that $\BB[s,r,i]$ lies on the $j$-th interval in $sr'$ path. Then, \\
  $d(s,r,\BB[s,r,i]) = \min \begin{cases}
  \MTC(s,r',\BB[s,r,i]) + |r'r| \\
  sr' \diamond \BB[s,r',j] + |r'r| \\
  \end{cases}$.

  Here we are just expanding the path cover lemma for the tuple $(s,r',\BB[s,r,i])$.
  For the first case, we have added an edge from $[s]$ to $[s,r,i]$ with weight $\MTC(s,r', \BB[s,r,i])$.

  And for the second term, since $sr' \diamond \BB[s,r',j]$ has edge length strictly
  less than $k$, using induction hypothesis, we can assume that we have set
  $d(s,r',\BB[s,r,j]) = |sr' \diamond \BB[s,r',j]|$. Thus, Dijkstra's algorithm will find $P$ correctly.
  \end{enumerate}
  \else
  
  See the Appendix \ref{app:6} for proof.
  \fi
  \end{proof}
  %\else
  %For lack of space, we show how to calculate the first part of $\MTC$ and how to find the
  %last term in the path cover lemma in the appendix. We will also show the conditional lower bound
  %for $\MSR$ in the appendix. We invite the concerned reader to see the appendix
  %for the full version of the paper.
  %\fi
  Thus, we claim the main theorem of the paper:
  \begin{theorem}
    There is a randomized combinatorial algorithm that solves the $\MSR$ problem in $\TL(m\NS + \SI n^2)$.
  \end{theorem}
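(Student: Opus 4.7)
The plan is to assemble the four preceding building blocks — source-to-landmark replacement paths via centers (Section~\ref{sec:general}), far edges (Section~\ref{sec:far}), small replacement paths avoiding a near edge (Section~\ref{sec:smallsuffix}), and large replacement paths avoiding a near edge (Section~\ref{sec:near}) — and to verify that together they compute $d(s,t,e)$ for every $s \in \SSS$, $t \in V$ and every edge $e$ on the $st$ path within the claimed budget. Correctness will follow from an exhaustive case analysis on $e$ and on the length of $\SUF(P)$; the running-time bound will follow by summing the per-section costs, each of which is already $\TL(m\NS + \SI n^2)$.

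First I would perform the preprocessing: sample $\RR$ and $\CC$ so that $|\RR|,|\CC| = \TL(\NS)$ w.h.p.\ by Lemma~\ref{lem:sizeR}; run $\BFS$ from every source and from every vertex of $\RR \cup \CC$ in total time $\TL(m\NS)$; build LCA structures on each such tree and store all distances $d(r,\cdot)$ in cuckoo hash-tables for $O(1)$ lookup. Then I would invoke Section~\ref{sec:general}: using the path-cover Lemma~\ref{lem:pathcover}, the auxiliary graphs of Sections~\ref{sec:centertosource} and~\ref{sec:centertolandmark}, and the bottleneck machinery of Lemma~\ref{lem:last}, compute $d(s,r,e)$ for every $s \in \SSS$, every $r \in \RR$, and every edge $e$ on the $sr$ path in $\TL(m\NS + \SI n^2)$ time. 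This single step is where essentially all the work lives and is the main obstacle: a naive application of \cite{MalikMG89,Hershberger2001,NardelliPW03} would cost $\TL((m+n)\SI\NS)$, and it is the reduction via centers and intervals that brings this down to the target.

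With the tables $d(s,r,e)$ and $d(r,t)$ in hand, I would finish the computation of $d(s,t,e)$ for arbitrary $t$ by a case split on $e$. If $e$ is a $k$-far edge, Lemma~\ref{lem:rexists} guarantees that $\SUF(P)$ contains some $r \in \RR_k$ with $|rt| \le 2^k \NBS \log n$ and $e \notin rt$, so Algorithm~\ref{alg:que} returns $|P|$; summing $\TL(|\RR_k|) = \TL(\NS/2^k)$ over the $\TL(2^k \NBS)$ $k$-far edges, the $\log \NS$ values of $k$, all $t$ and all sources gives $\TL(\SI n^2)$. If $e$ is a near edge with a small replacement path, the auxiliary graph $G_s$ of Section~\ref{sec:smallsuffix} of size $\TL((m+n)\NBS)$ and a single run of Dijkstra recover $d(s,t,e)$ (Lemma~\ref{lem:alwaysset}), for a total of $\TL(m\NS + \SI n^2)$ across the $\SI$ sources. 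If $e$ is a near edge with a large replacement path, Lemma~\ref{lem:longP} furnishes an $r \in \RR_0$ on $\SUF(P)$ with $e \notin rt$, so a scan of $\RR_0$ suffices, totalling $\TL(\SI n^2)$.

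Adding the four contributions yields $\TL(m\NS + \SI n^2)$. Correctness holds with high probability by a union bound over the polynomially many invocations of Lemmas~\ref{lem:rexists}, \ref{lem:rexists1}, \ref{lem:longP}, \ref{lem:rem1}, and~\ref{lem:rem}: each such invocation fails with probability at most $1/n^4$, and there are at most $\SI n^2 \le n^3$ triples $(s,t,e)$ to consider. This establishes the theorem.
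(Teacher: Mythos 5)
Your proposal is correct and follows essentially the same route as the paper: the theorem is obtained by assembling the preprocessing, the multi-source source-to-landmark computation of Section~\ref{sec:general}, and the far/near-edge case analysis of Sections~\ref{sec:far} and~\ref{sec:near}, with the running times summed exactly as you describe. The only cosmetic point is the order of dependencies (the small-replacement-path computation of Section~\ref{sec:smallsuffix} must run before the auxiliary graphs of Section~\ref{sec:general}, which consume its output), but this does not affect correctness.
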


\iflong
\else
For lack of space, we present our lower bound in Theorem \ref{thm:second} in the Appendix \ref{sec:lower}.
\fi

%\input{application}
% !TEX root = paper.tex

\section{Conditional Lower Bounds}
\label{sec:lower}
We now try to prove a conditional lower bound of $\Omega(m\sqrt{n\sigma})$ for $\MSR$ problem with $\sigma$ sources in undirected and unweighted graphs by giving a combinatorial reduction
from Boolean Matrix Multiplication (\BMM) to \MSR. This can be seen as a simple extension
of the lower bound obtained in \cite{ChechikC19}.

Let $\BMM(n,m)$  be a combinatorial algorithm  for multiplying two matrices $A$ and $B$ both of size $n \times n$  such that the total number of 1's in both $A$ and $B$ is $m$. A combinatorial algorithm does not use any matrix multiplication. The conditional lower bound relies on the conjecture for combinatorial $\BMM$, that there does not exist any truly subcubic algorithm for it.

\begin{conjecture}
 In the Word RAM model with words of $O(\log n)$ bits, any combinatorial algorithm for multiplying two Boolean  matrices $A$ and $B$ of size $n \times n$ with a total number of $m$ 1's in them requires $(mn)^{1-o(1)}$ time in expectation to compute.
\end{conjecture}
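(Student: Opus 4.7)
The statement is labeled a \emph{conjecture}, not a theorem, and in fact is a well-known open problem; no proof is known in the unrestricted Word RAM model. Any honest plan must begin by acknowledging this: unconditional super-linear lower bounds for explicit problems in the Word RAM model are beyond current techniques, so what I would actually write is not a proof but a discussion of the evidence and of the obstacles one would have to overcome. The plan, then, is to frame the conjecture as a natural sparsity-parametrized extension of the standard combinatorial BMM conjecture (that $n\times n$ Boolean matrix multiplication requires $n^{3-o(1)}$ combinatorial time), and to justify the specific exponent $(mn)^{1-o(1)}$ by giving fine-grained evidence from the known algorithmic landscape.

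The concrete steps I would lay out are as follows. First, I would reduce the sparse case to the dense case: given an instance of dense $N \times N$ BMM, set $m = \Theta(N^2)$ and $n = N$, so that a combinatorial algorithm running in $o((mn)^{1-\varepsilon})$ time would solve dense BMM in $o(N^{3-\varepsilon'})$ time, contradicting the standard combinatorial BMM conjecture. This gives the bound for the extremal sparsity regime. Second, I would try to extend this by a padding / blocking argument: given an input with density parameter $m$, partition the rows of $A$ and columns of $B$ into blocks of size chosen so that each block instance corresponds to a dense BMM of the right dimensions; a fast sparse algorithm applied to the full instance should then beat dense BMM on the induced blocks. The goal of this step is to match $(mn)^{1-o(1)}$ across the full range $n \le m \le n^2$, which is exactly the running time achieved by the fastest known combinatorial algorithms (such as those based on the method of four Russians, giving $\tilde O(mn/\log^2 n)$-type bounds).

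The main obstacle, and the reason this remains a conjecture rather than a theorem, is twofold. First, the standard combinatorial BMM conjecture itself is unproven; any reduction-based approach can at best show that the sparse version is \emph{as hard as} dense BMM, not that it requires $(mn)^{1-o(1)}$ time absolutely. Second, even conditioned on dense combinatorial BMM hardness, the reduction has to be tight across the full sparsity spectrum: a naive block decomposition loses logarithmic or polynomial factors, and it is not obvious that $o((mn)^{1-\varepsilon})$ for sparse inputs implies $o(N^{3-\varepsilon'})$ for dense inputs for every $\varepsilon>0$. A careful win-win argument, splitting cases by whether most rows of $A$ or most columns of $B$ are heavy versus light and reducing each case separately to a dense sub-instance of appropriate dimension, is the technical heart of what one would have to write.

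Given these obstructions, my honest proposal is to present the statement as a conjecture supported by (i) the matching algorithmic upper bound of $\tilde O(mn)$ for combinatorial sparse BMM, (ii) a reduction showing that breaking $(mn)^{1-\varepsilon}$ in the extremal dense regime would refute the standard combinatorial BMM conjecture, and (iii) a block-reduction argument extending this to intermediate $m$. I would explicitly state that a full unconditional proof is not available and that the conjecture is used only as a hypothesis to derive the conditional lower bound for $\MSR$ in Theorem~\ref{thm:second}.
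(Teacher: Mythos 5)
You are right to refuse to ``prove'' this statement: it is the standard combinatorial Boolean matrix multiplication hardness conjecture, and the paper likewise states it without proof, using it only as a hypothesis for the conditional lower bound of Theorem~\ref{thm:second}. Your framing (matching $\tilde O(mn)$ upper bound, consistency with dense combinatorial BMM hardness, role as an assumption) is consistent with the paper's treatment, so there is nothing to correct.
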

Let $\MSR(n,m)$ denote our multiple source replacement path algorithm for unweighted graph with $n$ vertices, $m$ edges and $\sigma$ sources. We will now reduce $\BMM(n,m)$ to $\MSR(n,m)$

\begin{theorem}
For a combinatorial algorithm $\MSR(n,m)$ with runtime of $T(n,m)$, there is a combinatorial algorithm for $\BMM(n,m)$  problem with runtime of $O(\NBS T(O(n),O(m)))$.
\end{theorem}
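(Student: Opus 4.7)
The plan is to extend the Chechik--Cohen reduction from $\BMM$ to $\SSR$ (the $\sigma = 1$ special case) to a reduction from $\BMM$ to $\MSR$, so that each call to a $T(n,m)$-time MSR algorithm absorbs $\sqrt{\sigma}$ times more BMM information than a single SSRP call. Given a $\BMM(n,m)$ instance with matrices $A, B \in \{0,1\}^{n\times n}$ of $m$ total ones, I will partition the columns of $B$ into $\lceil\NBS\rceil$ groups $J_1, J_2, \ldots$ of size $\NS$, and for each group $J$ construct a graph $G_J$ on $O(n)$ vertices with $O(m)$ edges equipped with $\sigma$ sources such that a single $\MSR$ call on $G_J$ reveals all entries $(AB)_{ik}$ with $i\in[n]$ and $k\in J$. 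Processing all $\NBS$ groups then recovers the full product in $O(\NBS\cdot T(O(n),O(m)))$ time.

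The graph $G_J$ shares a tripartite core across all sources: a layer $U=\{u_1,\ldots,u_n\}$ of row vertices, a middle layer $V=\{v_1,\ldots,v_n\}$, and an output layer $W=\{w_k : k\in J\}$, with an edge $(u_i,v_j)$ for every $A_{ij}=1$ and an edge $(v_j,w_k)$ for every $B_{jk}=1$. Added to this core is a shared ``backbone'' of length $\NS$, the $\sigma$ source vertices $s_1,\ldots,s_\sigma$ attached at one end, and $n$ target vertices $t_1,\ldots,t_n$ attached at the other end (with $t_i$ additionally connected to $u_i$ through a short padding gadget). Using the Chechik--Cohen padding technique, the backbone is engineered so that (i) the shortest path $s_q t_i$ has a single fixed length $L_0$ for every $(q,i)$ and traverses the backbone in order; (ii) the $k$-th backbone edge on this path is canonically labeled with column $k\in J$; and (iii) the shortest replacement path $s_q t_i \diamond e_k$ is forced to detour through the tripartite core and has length $L_0 + c$ if and only if some $j$ satisfies $A_{ij}=B_{jk}=1$, i.e., $(AB)_{ik}=1$. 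Each entry of the column-restricted product $C|_J$ is then decoded in $O(1)$ time from a single value in the MSR output.

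For the size and time accounting, $G_J$ has $O(n)$ vertices (the $\sigma \le n$ sources, $n$ targets, the three tripartite layers, and an $O(\NS) = O(n)$-long backbone) and $O(m) + O(n)$ edges (the matrix edges of $A$ and $B$, plus the linear-size backbone and the source/target incidences). Each MSR call therefore costs $T(O(n), O(m))$, and the $O(n^2)$ extraction of product entries across the $\NBS$ calls is dominated by this. The total runtime is $O(\NBS \cdot T(O(n), O(m)))$, which is exactly the bound stated in the theorem; combined with the combinatorial BMM conjecture, this also rules out any combinatorial $\MSR$ algorithm with runtime below $(m\NS)^{1-o(1)}$, matching the upper bound obtained in the previous sections.

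The hard part will be engineering the backbone/detour gadget so that conditions (ii) and (iii) hold when the backbone is shared across all $\sigma$ sources: deleting a single backbone edge $e_k$ must force the replacement path $s_q t_i \diamond e_k$, for every $q$, to traverse both $u_i$ and $w_k$, so that its length exactly reveals $(AB)_{ik}$ and cannot be ``fooled'' by an alternative short detour. Chechik and Cohen achieve this for $\sigma=1$ by padding the backbone and the $(u_i,t_i)$ connectors with dummy edges whose lengths balance so that only the ``honest'' tripartite detour is short enough. Extending this to $\sigma>1$ reduces to verifying that attaching the $\sigma$ sources to a common backbone endpoint via single edges preserves the required length invariants uniformly in $q$; once this invariant transfer is established, the remainder of the argument is the routine packing computation sketched above.
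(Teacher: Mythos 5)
There is a genuine gap in your construction: your $\sigma$ sources are informationally redundant, and this redundancy is fatal to the size bound. In your graph $G_J$ all sources hang off the \emph{same} endpoint of a \emph{single shared} backbone of length $\NS$, so for every $q$ the replacement path $s_q t_i \diamond e_k$ reveals exactly the same bit $(AB)_{ik}$; one source would do. The $\sqrt{\sigma}$-fold gain per call in your scheme therefore comes entirely from stretching the backbone from $\sqrt{n}$ (the Chechik--Cohen length) to $\NS$. But the padding that you defer to the ``hard part'' cannot then be kept linear: to force the detour for $e_k$ to exit the backbone at the connector leading to $w_k$ (rather than at an earlier, cheaper connector), the connector at position $k$ must have length $\Theta(k)$, exactly as in Chechik--Cohen. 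Summed over the $\NS$ backbone positions this is $\Theta\bigl((\NS)^2\bigr)=\Theta(n\sigma)$ padding vertices, so $G_J$ has $\Theta(n\sigma)$ vertices rather than $O(n)$ (for $\sigma=n$ it has $\Theta(n^2)$ vertices), and the reduction only yields $O(\NBS\, T(O(n\sigma),O(m+n\sigma)))$, which does not give the claimed lower bound. Shortening the backbone to $\sqrt n$ to restore $O(n)$ size loses the $\sqrt\sigma$ factor and degenerates to the $\SSR$ bound.

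The paper's construction avoids this precisely by making the sources non-redundant: the $\NS$ deletable positions are split into $\sigma$ \emph{disjoint} tail paths $P_1,\dots,P_\sigma$, each of length only $\NBS$, with a distinct source at the end of each. Each source probes its own block of $\NBS$ rows of $A$ (the deleted edge selects the row, the targets $c(\ell)$ range over all columns, which is the transpose of your indexing, but that difference is immaterial). Because the padding resets at the start of each source's segment (the $2((j-1)\bmod\NBS)+1$ term), the total padding is $\NS\cdot O(\NBS)=O(n)$, the graph stays at $O(n)$ vertices and $O(m)$ edges, and each $\MSR$ call genuinely extracts $\sigma\cdot\NBS\cdot n=n^{1.5}\sqrt\sigma$ product entries. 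To repair your proof you would need to replace the shared backbone by per-source deletable segments of length $\NBS$ in the same way.
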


\begin{proof}
Consider three matrices $A,B$ and $C$ such that $C = A \times B$. Next we show how to compute $C$ using our $\MSR(n,m)$ algorithm. To this end, we create $\NBS$ graphs $\{G_1,G_2, \dots , G_{\NBS}\}$  each containing 3 sets of vertices $V_a = \{a(1),a(2),\dots,a(n)\}, V_B = \{b(1),b(2),\dots,b(n)\}$  and $V_c = \{c(1),c(2),\dots,c(n)\}$. Each graph will have $O(n)$ vertices and $O(m)$ edges.
 In each graph $G_{i}$, we will use $\MSR$ algorithm  to find all values of rows $C[(i-1) \times \NS + j]$ for all $1\leq i \leq \NBS$,$1\leq j \leq \NS$.

Let us give the construction of $G_i$.
 For all $ 1 \le x,y \le n$, we add an edge between $a(x)$ and $b(y)$ if $A[x][y]=1$. Similarly we add an edge between $b(x)$ and $c(y)$ if $B[x][y]=1$. We add additional vertices in graph $v(1),v(2),\dots, v(\NS)$. We create $\sigma$ paths such that $P_{j}=v((j-1) \NBS +1),v((j-1) \NBS  +2),\dots,v(j\NBS)$, where $1\leq j \leq \sigma$ and each vertex $v(j\NBS)$ is a source vertex.

After that, we connect $v(j)$ to $a((i-1)\NS+j)$ by a path of $2((j-1) \mod \NBS) + 1$ additional vertices, where $1\leq j \leq \NS$. Thus, the distance of $v(1)$ from $a(1)$ is 1, distance of $v(2)$ from $a(2)$ is 3 and so on. The distance is again reset at $j = \NBS+1$, the distance of $v(\NBS+1)$ from $a(\NBS+1)$ is 1. This graph construction will give all values of row $C[(i-1)\NS+j]$, for all $1\leq j \leq \NS$ by running $\MSR$ algorithm on $G_i$.

Let us consider first graph $G_1$ and its first source $v(\NBS)$. If the shortest path from $v(\NBS)$ to $c_{\ell}$ for all $1\leq \ell \leq n$ is of length $\NBS+3$ then we set $C[1][\ell]=1$ (in this case path will be $v(\NBS),v(\NBS-1),\dots, v(1),a_1,b_{\ell'},c_{\ell})$), otherwise it is $0$. For edge failure $e(v(1),v(2))$, if there  is path length is $\NBS+5$ then $C[2][\ell]=1$ (in this case path will be $v(\NBS),v(\NBS-1),\dots,v(2),\dots,a_2,b_{\ell'},c_{\ell})$ in $G_{i}/e(v(1),v(2))$), otherwise $0$.
Similarly, one can see that we can find all values of rows $ C[1], C[2], \dots, C[\NS]$ by executing
$\MSR$ algorithm in $G_1$. Thus, by running $\MSR$  algorithm on $G_1, G_2, \dots, G_{\NBS}$, we
can find all rows of $C$.
\end{proof}

\bibliographystyle{plain}
\bibliography{paper}

\end{document}